\newif\iflong\longfalse
\title{P\texorpdfstring{{\smaller INFER}}{INFER}: Privacy-Preserving
  Inference for Machine Learning\thanks{This is an extended version
    of~\cite{JP19}.}} 
\title{P\texorpdfstring{{\smaller INFER}}{INFER}: Privacy-Preserving
  Inference}
\titlerunning{\textsc{Pinfer}: Privacy-Preserving Inference}
\author{Marc Joye \and Fabien A. P. Petitcolas}
\institute{OneSpan, Brussels, Belgium\\
  \email{\{marc.joye,fabien.petitcolas\}{@}onespan.com}}
\renewcommand*{\bbbr}{\mathbb{R}}
\renewcommand*{\bbbz}{\mathbb{Z}}
\DeclarePairedDelimiter{\abs}{\lvert}{\rvert}
\DeclarePairedDelimiter{\pred}{\textbf{[}}{\textbf{]}}
\DeclareMathOperator{\lcm}{lcm}
\DeclareMathOperator{\sign}{sign}
\DeclareMathOperator{\ReLU}{ReLU}
\newcommand*{\getsr}{\stackrel{\scriptscriptstyle\mkern2mu R}{\gets}}
\newcommand*{\simto}{\stackrel{\smash{\raisebox{-.5ex}{%
  $\scriptscriptstyle\sim\mkern4mu$}}}{\to}}
\newcommand*{\cding}[1]{\@tempcnta=#1\advance\@tempcnta by181%
  \textcolor{gray!25!black}{\ding{\@tempcnta}}}
\newcommand*{\tran}{{\raisebox{2pt}{$\scriptscriptstyle\intercal$}}}
\newcommand{\larrow}[1][]{\stackrel{\mathclap{\substack{\textstyle#1}}}{%
    \hbox to.225\textwidth{\leftarrowfill}}}
\newcommand{\rarrow}[1][]{\stackrel{\mathclap{\substack{\textstyle#1}}}{%
    \hbox to.225\textwidth{\rightarrowfill}}}
\newcommand{\ul}[1]{\underline{\smash{#1}}}
\newcommand*{\B}{\mathcal{B}}
\newcommand*{\D}{\mathcal{D}}
\newcommand*{\M}{\mathcal{M}}
\newcommand*{\X}{\mathcal{X}}
\newcommand*{\Y}{\mathcal{Y}}
\DeclareMathOperator{\sigmoid}{\sigma}
\DeclareSymbolFont{largesymbolsstix}{LS2}{stixex}{m}{n}
\DeclareMathDelimiter{\lBrack}{\mathopen}{largesymbolsstix}{"E0}{largesymbolsstix}{"06}
\DeclareMathDelimiter{\rBrack}{\mathopen}{largesymbolsstix}{"E1}{largesymbolsstix}{"07}
\DeclareMathDelimiter{\lBrace}{\mathopen}{largesymbolsstix}{"E8}{largesymbolsstix}{"0E}
\DeclareMathDelimiter{\rBrace}{\mathopen}{largesymbolsstix}{"E9}{largesymbolsstix}{"0F}
\DeclarePairedDelimiter{\dbracket}{\lBrack}{\rBrack}
\DeclarePairedDelimiter{\idbracket}{\rBrack}{\lBrack}
\DeclarePairedDelimiter{\dbrace}{\lBrace}{\rBrace_{\mathstrut\!s}}
\DeclarePairedDelimiter{\idbrace}{\rBrace}{\lBrace_{\mathstrut\!s}}
\newcommand*{\HEnc}{\dbracket}
\newcommand*{\HDec}{\idbracket}
\newcommand*{\HHEnc}{\dbrace}
\newcommand*{\HHDec}{\idbrace}
\newcommand*{\secpar}{\kappa}
\newcommand*{\pk}{\mathit{pk}}
\newcommand*{\sk}{\mathit{sk}}
\newcommand*{\smashscript}[2]{{\smash{#1}_{\scriptscriptstyle\mkern-2mu #2}}}
\newcommand*{\pkC}{\smashscript{\pk}{C}}
\newcommand*{\skC}{\smashscript{\sk}{C}}
\newcommand*{\pkS}{\smashscript{\pk}{S}}
\newcommand*{\skS}{\smashscript{\sk}{S}}
\newcommand*{\deltaC}{\smashscript{\delta}{C}}
\newcommand*{\deltaS}{\smashscript{\delta}{S}}
\newcommand*{\ct}{\mathfrak}
\newcommand*{\yhat}{\hat{y}}
\newcommand*{\tstar}{t^*}
\newcommand*{\thetax}{\vec{\theta}^\tran\vec{x}}
\newcommand*{\cboxplus@}{\vcenter{\hbox{$\m@th\mathchar"0401$}}}
\newcommand*{\cboxminus@}{\vcenter{\hbox{$\m@th\mathchar"040C$}}}
\renewcommand*{\boxplus}{\mathbin{\cboxplus@}}
\renewcommand*{\boxminus}{\mathbin{\cboxminus@}}
\newcommand*{\b@xsum@}[1]{\vphantom{\sum}%
  \vcenter{\hbox{\resizebox{!}{#1}{$\m@th\cboxplus@$}}}}
\newcommand*{\boxsum@}{\mathop{%
    \mathchoice{\b@xsum@{2.75ex}}{\b@xsum@{2ex}}%
    {\b@xsum@{1.35ex}}{\b@xsum@{1ex}}}}
\newcommand*{\boxsum}{\DOTSB\boxsum@\slimits@}
\definecolor{lightsalmon}{rgb}{1,.628,.48}
\definecolor{lemonchiffon}{rgb}{1,1,.749}
\definecolor{antiquewhite}{rgb}{1,.996,.96}
\definecolor{pearl}{rgb}{.991,.991,.995}
\let\le\leqslant\let\leq\leqslant
\let\ge\geqslant\let\geq\geqslant
\newenvironment{protocol}%
  {\begin{mdframed}[backgroundcolor=pearl]
    \setlength{\tabcolsep}{3pt}
    \centering
    \begin{tabular}{p{.35\textwidth}cp{.35\textwidth}}
      \iflong
      \multicolumn{1}{c}{\includegraphics[width=1.25cm]{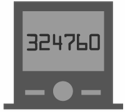}}
      && \multicolumn{1}{c}{\includegraphics[width=1.25cm]{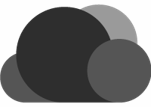}}\\
      \fi%
  }%
  {\end{tabular}\end{mdframed}}
\iflong\pagestyle{plain}\fi
\newcommand*{\noparbreak}{\par\nobreak\@afterheading\smallskip}
\begin{document}

\maketitle
\iflong\thispagestyle{plain}\fi

\begin{abstract}
  The foreseen growing role of outsourced machine learning services is
  raising concerns about the privacy of user data.
  \iflong
  Several technical solutions are being proposed to address the
  issue. Hardware security modules in cloud data centres appear
  limited to enterprise customers due to their complexity, while
  general multi-party computation techniques require a large number of
  message exchanges.
  \fi
  This paper proposes a variety of protocols for privacy-preserving
  regression and classification that (i)~only require additively
  homomorphic encryption algorithms, (ii)~limit interactions to a mere
  request and response, and (iii)~that can be used directly for
  important machine-learning algorithms such as logistic regression
  and SVM classification. The basic protocols are then extended and
  applied to feed-forward neural networks.  

  \begin{keywords}
    Machine learning as a service \and
    Linear regression \and Logistic
    regression \and Support vector machines \and Feed-forward neural
    networks \and Data privacy \and Additively homomorphic encryption
  \end{keywords}
\end{abstract}

\section{Introduction}\label{sec:introduction}

The popularity and hype around machine learning, combined with the
explosive growth of user-generated data, is pushing the development of
\emph{machine learning as a service} (MLaaS). A typical application
scenario of MLaaS is shown in \cref{fig:general}. It involves a
client sending data to a service provider (server) owning and running
a trained machine learning model for a given task (e.g., medical
diagnosis). Both the input data and the model should be kept private:
for obvious privacy reasons on the client's side and to protect
intellectual property on the server's side.

\begin{figure}[ht]
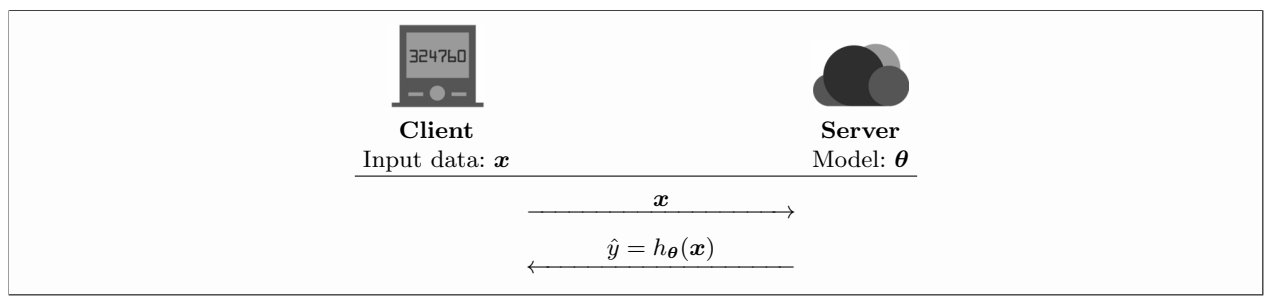

  \begin{protocol}
    \multicolumn{1}{c}{\textbf{Client}}
    && \multicolumn{1}{c}{\textbf{Server}}\\
    \multicolumn{1}{c}{Input data: $\vec{x}$}
    && \multicolumn{1}{c}{Model: $\vec{\theta}$}\\
    \hline
    \noalign{\medskip}
    & $\rarrow[\vec{x}]$ &\\
    \noalign{\smallskip}
    & $\larrow[\yhat = h_{\vec{\theta}}(\vec{x})]$
  \end{protocol}
  \caption{A server offering MLaaS owns a model defined
    by its parameters~$\vec{\theta}$. A client needs the prediction
    $h_{\vec{\theta}}(\vec{x})$ of this model for a new input data
    $\vec{x}$. This prediction is a function of the model and of the
    data.}\label{fig:general}
\end{figure}

In this paper we look at various protocols allowing the realisation of
such scenario in a minimum number of message exchanges between both
parties. Our assumption is that both the client and the server are
honest but curious, that is, they both follow the protocol but may
record information all along with the aim, respectively, to learn the
model and to breach the client's privacy. Our design is guided by the
following \emph{ideal} requirements, in decreasing
importance:\noparbreak
\begin{enumerate}
\item\label{itm:data_conf}\textbf{Input confidentiality}---The server
  does not learn anything about the input data $\vec{x}$ provided by
  the client;
\item\label{itm:output_conf}\textbf{Output confidentiality}---The
  server does not learn the outcome $\hat{y}$
  of the calculation; 
\item \label{itm:min_output}\textbf{Minimal model leakage}---The
  client does not learn any other information about the model beyond
  what is revealed by the successive outputs.
\end{enumerate}

With respect to the issue of model leakage, it is noted that the
client gets access to the outcome, i.e., the value of  
$h_{\vec{\theta}}(\vec{x})$, which may leak information 
about~$\vec{\theta}$, violating Requirement~\ref{itm:min_output}. This is 
unavoidable and not considered as an attack within our framework. 
Possible countermeasures to limit the leakage on the model include
rounding the output or adding some noise to it~\cite{TZJ+16}.

\subsubsection*{Related Work}
Earliest works for private machine learning evaluation~\cite{AS00,LP00} 
were concerned with training models in a privacy-preserving manner. More
recent implementations for linear regression, logistic regression, as 
well as neural networks are offered by SecureML~\cite{MZ17}. The case of 
support vector machines (SVM) is, for example, covered in \cite{ZWY+17}. 
On the contrary, this paper deals with the problem of privately
\emph{evaluating} a linear machine-learning model, including
linear/logistic regression and SVM classification. In~\cite{BLN14},
Bos et al. suggest to evaluate a logistic regression model by
replacing the sigmoid function with its Taylor series expansion. They
then apply fully homomorphic encryption so as to get the output result
through a series of multiplications and additions over encrypted data.
They observe that using terms up to degree $7$ the Taylor expansion
gives roughly two digits of accuracy to the right decimal. Kim et
al.~\cite{KSW+18} argue that such an expansion does not provide enough
accuracy on real-world data sets and propose another polynomial
approximation. For SVM classification, Zhang et
al.~\cite[Protocol~2]{ZWY+17} propose to return an encryption of the
raw output. The client decrypts it and applies the
discriminating function to obtain the corresponding class.
Unfortunately, this leaks more information than necessary on the
model. A similar path is taken by Barni et al. in~\cite{BOP06} for
feed-forward neural networks. Extracting the model (even partially)
is nevertheless more difficult in their case because of the inherent
complexity of the model. Moreover, to further obfuscate it (and
thereby limit the potential leakage), the authors suggest to randomly
permute computing units (neurons) sharing the same activation function
or to add dummy ones. For classification, the approach put forward by
Bost et al.~\cite{BPTG15} is closest to ours. They construct three
classification protocols fulfilling our design criteria
(Requirements~\ref{itm:data_conf}--\ref{itm:min_output}):  hyperplane
decision, na\"{\i}ve Bayes, and decision trees. An approach
orthogonal to ours that introduces privacy in regression or
classification is differential privacy~\cite{DF18}. Crucially, it can
be combined with secure computation, in our case by incorporating
noise in the input vectors or in the model parameters. Differential
privacy can thus be used to enhance the privacy properties of our
protocols.

\subsubsection*{Our Contributions}
Our paper follows the line of work by Bost et al., making use only of
\emph{additively} homomorphic encryption (i.e., homomorphic encryption
supporting additions). We devise new privacy-preserving protocols for
a variety of important prediction tasks. The protocols we propose
either improve on~\cite{BPTG15} or address machine-learning models not
covered in~\cite{BPTG15}. In particular, we aim at minimising the
number of message exchanges to a mere request and response. This is
important when latency is critical. An application
of~\cite[Protocol~4]{BPTG15} to binary SVM classification adds a
round-trip to the comparison protocol whereas our implementation
optimally only needs a \emph{single} round-trip, all included.
Likewise, a single round-trip is needed in our private logistic
regression protocol, as in~\cite{BLN14,ZWY+17}. But contrary
to~\cite{BLN14,ZWY+17}, the resulting prediction is \emph{exact} in
our case (i.e., there is no loss of accuracy) and does not require the
power of fully homomorphic encryption. With respect to neural
networks, we adapt our protocols to binarised networks and to networks
relying of the popular $\ReLU$ activation; see
\Cref{subsec:relu_activation}. As far as we know, this results in the
first privacy-preserving implementation of the non-linear $\ReLU$
function from additively homomorphic encryption.

\iflong
\subsubsection*{Organisation}
The rest of this paper is organised as follows. In \cref{sec:preliminaries},
we give a short summary of important machine learning techniques for
which we will propose secure protocols. We also recall cryptographic
tools on which we will build out protocols. In \cref{sec:pinfer}, we
propose three families of protocols for private inference. They do not
depend on a particular additively homomorphic encryption scheme. We
next apply in \cref{sec:application} our protocols to the private
evaluation of neural networks. Finally, the paper concludes in
\cref{sec:conclusion}.
\fi

\section{Preliminaries}\label{sec:preliminaries}
\iflong
This section reviews some important machine learning models, which all
rely on the computation of an inner product. It also introduces
building blocks that are necessary in the subsequent design of our
privacy-preserving protocols.
\fi

\subsection{Linear Models and Beyond}\label{subsec:linear_models}
Owing to their simplicity, linear models (see, e.g.,
\cite[Chapter~3]{AML12} or~\cite[Chapters~3 and~4]{HTF09}) should not
be overlooked:  They are powerful tools for a variety of machine
learning tasks and find numerous applications.

\subsubsection{Problem Setup}
In a nutshell, machine learning works as follows. Each particular
problem instance is characterised by a set of $d$~features which may
be viewed as a vector $(x_1, \dots, x_d)^\tran$ of $\bbbr^d$. For
practical reasons, a fixed coordinate $x_0 = 1$ is added. We let
$\X \subseteq \{1\} \times \bbbr^d$ denote the input space and $\Y$
the output~space.
\iflong
Integer $d$ is called the dimensionality of the input data.
\fi

There are two phases:
\begin{itemize}
\item The \emph{learning phase} consists in approximating a target
  function $f \colon \X \to \Y$ from
  $\D = \bigl\{(\vec{x_i}, y_i) \in \X \times \Y \mid y_i =
  f(\vec{x_i})\bigr\}_{1 \le i \le n}$, a training set of $n$ pairs of
  elements. Note that the target function can be noisy. The output of
  the learning phase is a function $h_{\vec{\theta}} \colon \X \to \Y$
  drawn from some hypothesis set of functions.
\item In the \emph{testing phase}, when a new data point
  $\vec{x} \in \X$ comes in, it is evaluated on $h_{\vec{\theta}}$ as
  $\yhat = h_{\vec{\theta}}(\vec{x})$. The hat on $y$
  indicates that it is a predicted value.
\end{itemize}

\noindent Since $h_{\vec{\theta}}$ was chosen in a way to ``best
match'' $f$ on the training set $\D$, it is expected that it will
provide a good approximation on a new data point. Namely, we have
$h_{\vec{\theta}}(\vec{x_i}) \approx y_i$ for all $(\vec{x_i}, y_i)
\in \D$ and we should have $h_{\vec{\theta}}(\vec{x}) \approx 
f(\vec{x})$ for $(\vec{x}, \cdot) \notin \D$. Of course, this highly 
depends on the problem under consideration, the data points, and the 
hypothesis set of functions.\par\medskip

In particular, linear models for machine learning use a hypothesis set
of functions of the form $h_{\vec{\theta}}(\vec{x}) = g(\thetax)$
where $\vec{\theta} = (\theta_0, \theta_1, \dots, \theta_d)^\tran \in
\bbbr^{d + 1}$ are the model parameters and $g \colon \bbbr \to \Y$ is
a function mapping the linear calculation to the output space.
\iflong%
\par When the range of $g$ is real-valued and thus the prediction
result $\yhat \in \Y$ is a continuous value (e.g., a quantity or a
probability), we talk about \emph{regression}. When the prediction
result is a discrete value (e.g., a label), we talk about
\emph{classification}. An important sub-case is $\Y = \{+1, -1\}$.
Specific choices for $g$ are discussed in the next sections.
\else
Specific choices for $g$ are discussed hereafter.
\fi

\subsubsection{Linear Regression}
A \emph{linear regression} model assumes that the real-valued target
function~$f$ is linear---or more generally affine---in the input
variables. In other words, it is based on the premise that $f$ is well
approximated by an affine map; i.e., $g$ is the identity map:
$f(\vec{x_i}) \approx g(\vec{\theta}^\tran\vec{x_i}) =
\vec{\theta}^\tran \vec{x_i}$, $1 \le i \le n$, for training data
$\vec{x_i} \in \X$ and weight vector $\vec{\theta} \in \bbbr^{d+1}$.
\iflong%
This vector $\vec{\theta}$ is interesting as it reveals how the output
depends on the input variables. In particular, the sign of a
coefficient $\theta_j$ indicates either a positive or a negative
contribution to the output, while its magnitude captures the relative
importance of this contribution.
\fi

\iflong%
The linear regression algorithm relies on the least squares method to
find the coefficients of~$\vec{\theta}$: it minimises the sum of
squared errors
$\sum_{i=1}^n \left(f(\vec{x_i}) -
  \vec{\theta}^\tran\vec{x_i}\right)^2$.
\else
The linear regression algorithm relies on the least squares method to
find the coefficients of~$\vec{\theta}$.
\fi
Once $\vec{\theta}$ has been computed, it can be used to produce
estimates on new data points $\vec{x} \in \X$ as $\yhat = \thetax$.

\subsubsection{Support Vector Machines}
\iflong
We now turn our attention to another important problem: how to
classify data into different classes.
\else
Another important problem is to classify data into different classes.
\fi
This corresponds to a target function $f$ whose range $\Y$ is
discrete. Of particular interest is the case of two classes, say $+1$
and $-1$, in which case $\Y = \{+1, -1\}$. \iflong Think for example
of a binary decision problem where $+1$ corresponds to a positive
answer and $-1$ to a negative answer.\par\medskip \fi

In dimension $d$, an hyperplane $\Pi$ is given by an equation of the
form
$\theta_0 + \theta_1 X_1 + \theta_2 X_2 + \cdots + \theta_d X_d = 0$
where $\vec{\theta}' = (\theta_1, \dots, \theta_d)^\tran$ is the
normal vector to $\Pi$ and $\theta_0 /\|\vec{\theta}'\|$ indicates the
offset from the origin. \iflong\par\fi When the training data are
\emph{linearly separable}, there is some hyperplane $\Pi$ such that
for each $(\vec{x_i}, y_i) \in \D$, one has
\iflong
\begin{equation}\label{eq:threshold}
  \begin{cases}
    \theta_0 + \theta_1 x_{i,1} + \theta_2 x_{i,2} + \cdots + \theta_n
    x_{i,d} > 0 & \text{if $y_i = +1$}\\
    \theta_0 + \theta_1 x_{i,1} + \theta_2 x_{i,2} + \cdots + \theta_n
    x_{i,d} < 0 & \text{if $y_i = -1$}
  \end{cases}\thinspace,\quad 1 \leq i \leq n\thickspace,
\end{equation}
or equivalently (by scaling $\vec{\theta}$ appropriately):
\[ 
  y_i \, \vec{\theta}^\tran \vec{x_i} \geq 1\thinspace,
  \quad (1 \leq i \leq n) \enspace.
\]
\else
$y_i \, \vec{\theta}^\tran \vec{x_i} \geq 1$, ($1 \leq i \leq
n$). \fi
The training data points $\vec{x_i}$ satisfying
$y_i \, \vec{\theta}^\tran \vec{x_i} = 1$ are called \emph{support
  vectors}.
\iflong\par\fi
When the training data are not linearly separable, it is not possible
to satisfy the previous hard constraint
$y_i \, \vec{\theta}^\tran \vec{x_i} \geq 1$, ($1 \leq i \leq
n$). So-called ``slack variables''
$\xi_i = \max(0, 1 - y_i\, \vec{\theta}^\tran \vec{x_i})$ are
generally introduced in the optimisation problem.
\iflong
They tell how large a violation of the hard constraint there is on 
each training point---note that $\xi_i = 0$ whenever
$y_i \, \vec{\theta}^\tran \vec{x_i} \geq 1$.
\else
They tell how large a violation of the hard constraint there is on 
each training point.
\fi

There are many possible choices for $\vec{\theta}$. For better
classification, the separating hyperplane $\Pi$ is chosen so as to
maximise the \emph{margin}; namely, the minimal distance between any
training data point and $\Pi$.
\iflong\par\fi
Now, from the resulting model~$\vec{\theta}$, when a new data point
$\vec{x}$ comes in, its class is estimated as the sign of the
discriminating function $\thetax$; i.e., $\yhat = \sign(\thetax)$.
\iflong Compare with \cref{eq:threshold}.\par\fi

When there are more than two classes, the optimisation problem returns
several vectors $\vec{\theta_k}$, each defining a boundary between a
particular class and all the others. The classification problem
becomes an iteration to find out which $\vec{\theta_k}$ maximises
$\vec{\theta_k}^{\!\tran} \vec{x}$ for a given test point~$\vec{x}$.

\subsubsection{Logistic Regression}
\emph{Logistic regression} is widely used in predictive analysis to
output a probability of occurrence. The logistic function is defined
by the sigmoid function
\iflong \begin{equation} \else $ \fi
  \sigmoid\colon \bbbr \to [0,1],\ t \mapsto \sigmoid(t) =
  \frac{1}{1 + e^{-t}}
\iflong \enspace.\end{equation} \else $. \fi
The logistic regression model returns $h_{\vec{\theta}}(\vec{x}) =
\sigmoid(\thetax) \in [0,1]$, which can be interpreted as the probability
that $\vec{x}$ belongs to the class $y = +1$. The SVM classifier
thresholds the value of $\thetax$ around $0$, assigning to $\vec{x}$ the
class $y= +1$ if $\thetax > 0$ and the class $y = -1$
if $\thetax < 0$. In this respect, the logistic
function is seen as a soft threshold as opposed to the hard threshold,
$+1$ or $-1$, offered by SVM. Other threshold functions are possible.
Another popular soft threshold relies on $\tanh$, the hyperbolic
tangent function, whose output range is $[-1,1]$.

\begin{remark}
  Because the logistic regression algorithm predicts probabilities
  rather than just classes, we fit it through likelihood optimisation.
  Specifically, given the training set $\D$, we learn the model by
  maximising $\prod_{y_i = +1} p_i \cdot \prod_{y_i = -1} (1-p_i)$
  where $p_i = \sigmoid(\vec{\theta}^\tran\vec{x_i})$. This deviates
  from the general description of our problem setup, where the
  learning is directly done on the pairs $(\vec{x_i}, y_i)$. However,
  the testing phase is unchanged: the outcome is expressed as
  $h_{\vec{\theta}}(\vec{x}) = \sigmoid(\vec{\theta}^\tran\vec{x})$.
  It therefore fits our framework for private inference, that is, the
  private evaluation of
  $h_{\vec{\theta}}(\vec{x}) = g(\vec{\theta}^\tran\vec{x})$ for a
  certain function $g$; the sigmoid function $\sigmoid$ in this case.
\end{remark}

\subsection{Cryptographic Tools}\label{subsec:cryptotools}

\subsubsection{Representing Real Numbers}\label{subsubsec:real_numbers}
So far, we have discussed a number of machine learning models using
real numbers, but the cryptographic tools we intend to use require
working on integers.
\iflong%
We therefore start by recalling the necessary conversion.
An encryption algorithm takes as input an encryption key and a
plaintext message and returns a ciphertext. We let $\M \subset \bbbz$
denote the set of messages that can be encrypted.
\fi
In order to operate over encrypted data, we need to accurately 
represent real numbers as elements of
\iflong%
$\M$ (i.e., a finite subset of $\bbbz$).
\else
the finite message set $\M \subset \bbbz$.
\fi

To do that, since all input variables of machine learning models are 
typically rescaled in the range $[-1,1]$, one could use a fixed point 
representation. A real number $x$ with a fractional part of at most 
$P$ bits uniquely corresponds to signed integer $z = x \cdot 2^P$. 
Hence, with a fixed-point representation, a real number $x$ is 
represented by
\iflong
\[
  z = \lfloor x \cdot 2^P \rfloor\thinspace,
\]
\else
$z = \lfloor x \cdot 2^P \rfloor$,
\fi
where integer $P$ is called the bit-precision. The sum of
$x_1, x_2 \in \bbbr$ is performed as $z_1 + z_2$ and their 
multiplication as $\lfloor (z_1 \cdot z_2)/2^P \rfloor$.

\subsubsection{Additively Homomorphic Encryption}

\iflong%
Homomorphic encryption schemes come in different flavours. Before
Gentry's breakthrough result~(\cite{Gen09}), only addition operations
or multiplication operations on ciphertexts---but not both---were
supported. Schemes that can support an arbitrary number of additions
and of multiplications are termed \emph{fully} homomorphic encryption
(FHE) schemes. Our privacy-preserving protocols only need an 
\emph{additively} homomorphic encryption scheme.
The minimal security notion that we require is semantic
security~\cite{GM84}; in particular, encryption is
probabilistic. 
\fi

It is useful to introduce 
some notation. We let $\HEnc{\cdot}$ and $\HDec{\cdot}$ denote the
encryption and decryption algorithms, respectively. The message space
is an additive group $\M \cong \bbbz/M\bbbz$. It consists of integers
modulo $M$ and we view it as
$\M = \{-\lfloor M/2 \rfloor, \dots, \lceil M/2 \rceil - 1\}$ in order
to keep track of the sign.
\iflong%
The elements of $\M$ are uniquely identified with $\bbbz/M\bbbz$ via
the mapping $\Upsilon\colon \M \simto \bbbz/M\bbbz$,
${m \mapsto m \bmod M}$. The inverse mapping is given by
$\Upsilon^{-1}\colon \bbbz/M\bbbz \simto \M, m \mapsto m$ if
$m < \lceil M/2 \rceil$ and $m \mapsto m-M$ otherwise.
\fi
Ciphertexts are noted with Gothic letters.
\iflong%
The encryption of a message $m \in \M$ is
obtained using public key $\pk$ as $\ct{m} = \HEnc{m}_\pk$. It is
then decrypted using the matching secret key $\sk$ as
$m = \HDec{\ct{m}}_\sk$. When clear from the context, we drop the
$\pk$ or $\sk$ subscripts and sometimes use $\HHEnc{\cdot}$ and
$\HHDec{\cdot}$ to denote another encryption algorithm. 
\fi
If $\vec{m} = (m_1, \dots, m_d) \in \M^d$ is a vector, we write
$\vec{\ct{m}} = \HEnc{\vec{m}}$ as a shorthand for
$(\ct{m}_1, \dots, \ct{m}_d) = \left(\HEnc{m_1}, \dots,
  \HEnc{m_d}\right)$.
\iflong\else%
The minimal security notion that we require is semantic
security~\cite{GM84}; in particular, encryption is
probabilistic.
\fi

Algorithm $\HEnc{\cdot}$ being \iflong additively \else
\emph{additively} \fi homomorphic (over $\M$)
means that given any two plaintext messages $m_1$ and $m_2$ and their
corresponding ciphertexts ${\ct{m}_1 = \HEnc{m_1}}$ and
$\ct{m}_2 = \HEnc{m_2}$, we have
$\ct{m}_1 \boxplus \ct{m}_2 = \HEnc{m_1 + m_2}$ and
$\ct{m}_1 \boxminus \ct{m}_2 = \HEnc{m_1 - m_2}$ for some publicly
known operations $\boxplus$ and $\boxminus$ on ciphertexts. By
induction, for a given integer scalar $r \in \bbbz$, we also have
\iflong
\begin{align*}
  \HEnc{r\cdot m_1}
  &= \HEnc{m_1 + \dots + m_1}
    = \HEnc{m_1} \boxplus \dots \boxplus \HEnc{m_1}\\
  &= \underbrace{\ct{m}_1 \boxplus \dots \boxplus \ct{m}_1}_{\text{$r$ times}}
    \coloneqq r \odot \ct{m}_1\enspace.
\end{align*}
\else
$\HEnc{r\cdot m_1} \coloneqq r \odot \ct{m}_1$.\par
\fi
It is worth noting here that the decryption of
$(\ct{m}_1 \boxplus \ct{m}_2)$ gives $(m_1 + m_2)$ as an element of
$\M$; that is,
$\HDec{\ct{m}_1 \boxplus \ct{m}_2} \equiv m_1 + m_2 \pmod M$.
Similarly, we also have
$\HDec{\ct{m}_1 \boxminus \ct{m}_2} \equiv m_1 - m_2 \pmod M$ and
$\HDec{r \odot \ct{m}_1} \equiv r\cdot m_1 \pmod M$.

\subsubsection{Private Comparison Protocol}
In~\cite{DGK08,DGK09}, Damg{\aa}rd et al. present a
protocol for comparing private values. It was later extended and
improved in~\cite{EFG+09} and~\cite{Veu12,JS18}. The protocol makes
use of an additively homomorphic encryption scheme. It compares two
non-negative $\ell$-bit integers. The message space is
$\M \cong \bbbz/M\bbbz$ with $M \ge 2^\ell$ and is supposed to behave
like an integral domain\iflong\ (for example, $M$ a prime or an
RSA-type modulus).\else.\fi

\paragraph*{DGK+ protocol}
\iflong
The setting is as follows.
\fi
A client possesses a private $\ell$-bit value
$\mu = \sum_{i=0}^{\ell-1} \mu_i\,2^i$ and a server possesses a
private $\ell$-bit value $\eta = \sum_{i=0}^{\ell-1} \eta_i\,2^i$. They
seek to respectively obtain bits $\deltaC$ and
$\deltaS$ such that $\deltaC \oplus \deltaS = \pred{\mu \le \eta}$
(where $\oplus$ represents the exclusive \textsc{or} operator, and
$\pred{\mathsf{Pred}}=1$ if predicate $\mathsf{Pred}$ is true, and $0$
otherwise). Following~\cite[Fig.~1]{JS18}, the DGK+ protocol proceeds
in four steps:\noparbreak
\begin{enumerate}
\item The client encrypts each bit $\mu_i$ of $\mu$ under its public
  key and sends $\HEnc{\mu_i}$, $0 \le i \le \ell-1$, to the server.
\item The server chooses
  uniformly at random a bit
  $\deltaS$ and defines $s = 1 - 2\deltaS$. It also
  selects $\ell+1$  random non-zero scalars $r_i \in \M$,
  $-1 \le i \le \ell-1$.
\item\label{itm:hi*} Next, the server computes\footnote{Given
    $\HEnc{\mu_i}$, the server obtains $\HEnc{\eta_i \oplus \mu_i}$ as
    $\HEnc{\mu_i}$ if $\eta_i = 0$, and as
    $\HEnc{1} \boxminus \HEnc{\mu_i}$ if
    $\eta_i = 1$.\label{footnote:encrypt_xor}}
  \begin{equation}\label{eq:hi*}
    \begin{cases}
      \HEnc{h_i^*}      
      = r_i \odot \bigl(\HEnc{1} \boxplus \HEnc{s\cdot\mu_i} \boxminus
      \HEnc{s\cdot\eta_i}  \boxplus (\textstyle\boxsum_{j=i+1}^{\ell-1}
      \HEnc{\mu_j \oplus \eta_j})\bigr)\\
      &\qquad\mathllap{\text{for $\ell -1 \ge i \ge 0$}\thinspace,}\\
      \HEnc{h_{-1}^*}
      = r_{-1} \odot \bigl(\HEnc{\deltaS} \boxplus
      \textstyle\boxsum_{j=0}^{\ell-1} \HEnc{\mu_j \oplus
        \eta_j}\bigr)
    \end{cases}
  \end{equation}
  and sends the $\ell+1$ ciphertexts $\HEnc{h_i^*}$ in a \emph{random
    order} to the client.
\item Using its private key, the client decrypts the received
  $\HEnc{h_i^*}$'s. If one is decrypted to zero, the client sets
  $\deltaC = 1$. Otherwise, it sets $\deltaC = 0$.
\end{enumerate}

\begin{remark}\label{rem:DGKsec}
  At this point, neither the client, nor the server, knows whether
  $\mu \leq \eta$ holds. One of them (or both) needs to reveal its
  share of $\delta$ ($= \deltaC \oplus \deltaS$) so that the other can
  find out. Following the original DGK protocol~\cite{DGK08}, this
  modified comparison protocol is secure in the semi-honest model
  (i.e., against honest but curious adversaries).
\end{remark}

\iflong
\paragraph*{Correctness} The correctness of the protocol follows from
the fact that $\mu \leq \eta$ if only and only if:
\begin{itemize}[topsep=0pt]
\item $\mu = \eta$, or
\item there exists some index $i$, with $0 \le i \le \ell-1$, such
  that:
  \begin{enumerate}
    \renewcommand{\theenumi}{\roman{enumi}}
  \item $\mu_i < \eta_i$, and
  \item $\mu_j = \eta_j$ for $i + 1 \leq j \leq \ell-1$\thinspace.
  \end{enumerate}
\end{itemize}

\noindent As pointed out in~\cite{DGK08}, when $\mu \neq \eta$, this
latter condition is equivalent to the existence of some index
$i\in [0, \ell-1]$, such that
$\mu_i - \eta_i + 1 + \sum_{j=i+1}^{\ell-1} (\mu_j \oplus \eta_j) =
0$. This test was subsequently replaced in~\cite{EFG+09,JS18} to allow
the secret sharing of the comparison bit across the client and the
server as $\pred{\mu \le \eta} = \deltaC \oplus
\deltaS$. Adapting~\cite{JS18}, the new test checks the existence of
some index $i\in [0, \ell-1]$, such that
\[
  h_i = s(\mu_i - \eta_i) + 1 + \textstyle\sum_{j=i+1}^{\ell-1}
  (\mu_j \oplus \eta_j)
\]
is zero. When $\deltaS = 0$ (and thus $s=1$) this occurs if
$\mu < \eta$; when $\deltaS = 1$ ($s=-1$) this occurs if $\mu >
\eta$. As a result, the first case yields
$\deltaS = \neg\pred{\mu < \eta} = 1 \oplus \pred{\mu < \eta}$ while
the second case yields
$\deltaS = \pred{\mu > \eta} = \neg\pred{\mu \le \eta} = 1 \oplus
\pred{\mu \le \eta}$. This discrepancy is corrected in~\cite{Veu12} by
augmenting the set of $h_i$'s with an additional value $h_{-1}$ given
by
$h_{-1} = \deltaS + \textstyle\sum_{j=0}^{\ell-1} (\mu_j \oplus
\eta_j)$. It is worth observing that $h_{-1}$ can only be zero when
$\deltaS = 0$ and $\mu=\eta$. Therefore, in all cases, when there
exists some index $i$, with $-1 \le i \le \ell-1$, such that
$h_i = 0$, we have $\deltaS = 1\oplus\pred{\mu \le \eta}$, or
equivalently, $\pred{\mu \le \eta} = \deltaS \oplus 1$.

It is easily verified that $\HEnc{h_i^*}$ as computed in
Step~\ref{itm:hi*} is the encryption of $r_i\cdot h_i \pmod M$.
Clearly, if $r_i \cdot h_i \pmod M$ is zero then so is $h_i$ since, by
definition, $r_i$ is non-zero---remember that $M$ is chosen such that
$\bbbz/M\bbbz$ acts as an integral domain. Hence, if one of the
$\HEnc{h_i^*}$'s decrypts to $0$ then
$\pred{\mu \le \eta} = \deltaS \oplus 1 = \deltaS \oplus \deltaC$; if
not, one has $\pred{\mu \le \eta} = \deltaS = \deltaS \oplus \deltaC$.
This concludes the proof of correctness.
\fi


\section{Basic Protocols of Privacy-Preserving Inference}\label{sec:pinfer}

In this section, we present three families of protocols for private
inference. They aim to satisfy the ideal requirements given in the
introduction while keeping the number of exchanges to a bare minimum. 
Interestingly, they only make use of additively homomorphic encryption.

We keep the general model presented in the introduction, but now work
with integers only. The client holds
$\vec{x} = (1, x_1, \dots, x_d)^\tran \in \M^{d+1}$, a private feature
vector, and the server possesses a trained machine-learning model
given by its parameter vector
$\vec{\theta} = (\theta_0, \dots, \theta_d)^\tran \in \M^{d+1}$ or, in
the case of feed-forward neural networks a set of matrices made of
such vectors. At the end of protocol, the client obtains the value of
$g(\thetax)$ for some function $g$ and learns nothing else; the server
learns nothing. To make the protocols easier to read, for a
real-valued function~$g$, we abuse notation and write $g(t)$ for an 
integer $t$ assuming $g$ also includes the conversion to real values;
see \cref{subsec:cryptotools}. We also make the distinction between
the encryption algorithm $\HEnc{\cdot}$ using the client's public key
and the encryption algorithm $\HHEnc{\cdot}$ using the server's public
key and stress that, not only keys are different, but the algorithm
could also be different. We use $\HDec{\cdot}$ and $\HHDec{\cdot}$
for the respective corresponding decryption algorithms.

\subsection{Private Linear/Logistic Regression}\label{subsec:plogreg}
\iflong
\subsubsection*{Private Linear Regression}
As seen in \Cref{subsec:linear_models},
linear regression produces estimates
using the identity map for $g$:  $\yhat = \thetax$. Since
$\thetax \iflong = \sum_{j=0}^d \theta_j\,x_j \fi$ is linear, given an encryption
$\HEnc{\vec{x}}$ of $\vec{x}$, the value of
$\HEnc{\vec{\theta}^\tran\vec{x}}$ can be homomorphically evaluated,
in a provably secure way \cite{GLLM04}.

Therefore, the client encrypts its feature vector $\vec{x}$ under its
public key with an additively homomorphic encryption algorithm
$\HEnc{\cdot}$, and sends $\HEnc{\vec{x}}$ to the server. Using
$\vec{\theta}$, the server then computes
$\HEnc{\vec{\theta}^\tran\vec{x}}$ and returns it the client.
Finally, the client uses its private key to decrypt
$\HEnc{\vec{\theta}^\tran\vec{x}} = \HEnc{\yhat}$ and gets the
output~$\yhat$. This is straightforward and only requires one round
of communication.

\subsubsection*{Private Logistic Regression}
\else
Evaluating a linear regression model over encrypted data is
straightforward and requires just one round of communication; see
\Cref{subsec:linreg_more}. 
\fi
Things get more complicated for logistic regression.
At first sight, it seems counter-intuitive that additively homomorphic
encryption could suffice to evaluate a logistic regression model over
encrypted data. \iflong After all, the sigmoid function,
$\sigmoid(t)$, is non-linear.\par
The key observation is that the sigmoid function is \emph{injective}:
\[
  \sigmoid(t_1) = \sigmoid(t_2) \implies t_1 = t_2\enspace.
\]
\else 
After all, the sigmoid function, $\sigmoid(t)$, is non-linear.
The key observation is that the sigmoid function is \emph{injective}.
\fi 
So the client does not learn more about the model $\vec{\theta}$ from
$t \coloneqq \vec{\theta}^\tran \vec{x}$ than it can learn from
$\yhat \coloneqq \sigmoid(t)$ since the value of $t$ can be recovered
from $\yhat$ using $t = \sigmoid^{-1}(\yhat) = \ln(\yhat/(1-\yhat))$. 
\iflong
Consequently, rather than returning an encryption of the 
prediction~$\yhat$, we let the server return an encryption of $t$, 
without any security loss in doing so.
\fi

\subsubsection{Our Core Protocol}
The protocol we propose for privacy-preserving linear or logistic
regression is detailed in \cref{fig:linear_regression_core}. Let $(\pkC,\skC)$ denote
the client's matching pair of public encryption key/private decryption
key for an additively homomorphic encryption scheme $\HEnc{\cdot}$.
We use the notation of \Cref{subsec:cryptotools}.
If $B$ is an upper bound on
the inner product (in absolute value), the message space
$\M = \{-\lfloor M/2\rfloor, \dots, \lceil M/2\rceil - 1\}$ should be
such that $M \ge 2B + 1$.
\begin{figure}[h]
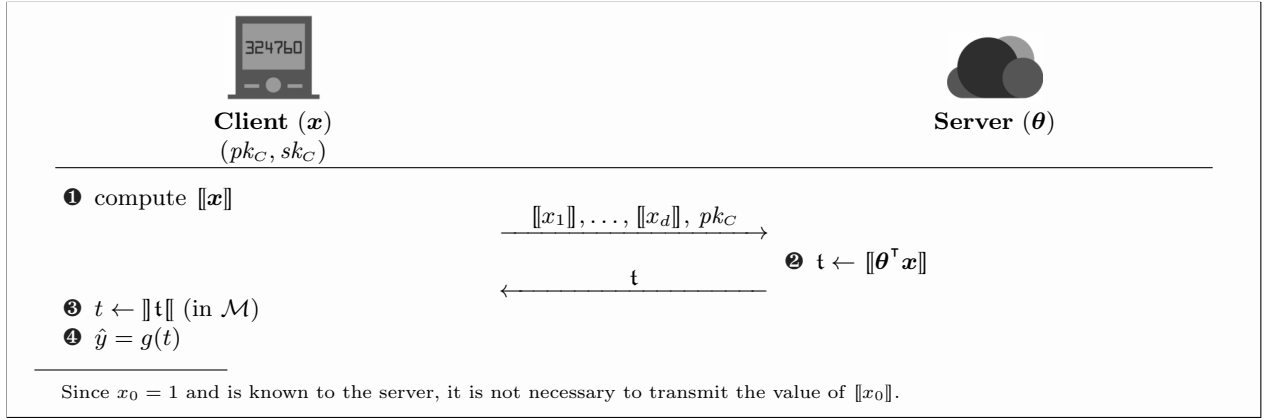

  \begin{protocol}
    \multicolumn{1}{c}{\textbf{Client} ($\vec{x}$)}
    && \multicolumn{1}{c}{\textbf{Server} ($\vec{\theta}$)}\\
    \multicolumn{1}{c}{$(\pkC, \skC)$}
    && \multicolumn{1}{c}{}\\
    \hline
    \noalign{\medskip}
    \cding{1}\enspace compute\footnotetext{\scriptsize Since $x_0 = 1$ and
      is known to the
      server, it is not necessary to transmit the value of
      $\HEnc{x_0}$.} $\HEnc{\vec{x}}$\\[-1ex]
    & $\rarrow[\HEnc{x_1}, \dots, \HEnc{x_d},\, \pkC]$\\
    && \cding{2}\enspace $\ct{t} \gets \HEnc{\vec{\theta}^\tran
      \vec{x}}$\\[-1ex]
    & $\larrow[\ct{t}]$\\[-1ex]
    \cding{3}\enspace $t \gets \HDec{\ct{t}}$ (in $\M$)\\
    \cding{4}\enspace $\yhat = g(t)$
  \end{protocol}
  \caption{Privacy-preserving regression. Encryption is done using the
    client's public key and noted $\HEnc{\cdot}$. The server learns
    nothing. Function $g$ is the identity map for linear regression and
    the sigmoid function for logistic regression.}\label{fig:linear_regression_core}
\end{figure}

\iflong
In more detail, our core protocol goes as follows.
\fi
\begin{enumerate}
\item In a first step, the client encrypts its feature vector
  $\vec{x} \in \M^{d+1}$ under its public key $\pkC$
  and gets
  $\HEnc{\vec{x}} = (\HEnc{x_0}, \HEnc{x_1}, \dots, \HEnc{x_d})$. The
  ciphertext $\HEnc{\vec{x}}$ along with the client's public key are
  sent to the server.
\item In a second step, from its model $\vec{\theta}$, the server 
  computes an encryption of the inner product over encrypted data as:
  \iflong
  \[ 
    \ct{t}
    =\HEnc{\vec{\theta}^\tran\vec{x}}
    = \HEnc{\theta_0} \boxplus \boxsum_{j=1}^d \theta_j \odot
    \HEnc{x_j}\enspace.
  \]
  \else
  $\ct{t} =\HEnc{\vec{\theta}^\tran\vec{x}} = \HEnc{\theta_0} \boxplus
  \boxsum_{j=1}^d \theta_j \odot \HEnc{x_j}$.
  \fi
  The server returns $\ct{t}$ to the client.
\item In a third step, the client uses its private decryption key  
  $\skC$ to decrypt $\ct{t}$, and gets the inner product
  $t = \thetax$ as a signed integer of $\M$.
\item In a final step, the client applies the $g$ function to obtain   
  the prediction $\yhat$ corresponding to input vector~$\vec{x}$.
\end{enumerate}

\subsubsection{Dual Approach}
The previous protocol encrypts with the client's public
key $\pkC$. In the dual approach, the server's public key is used for
encryption. Let $(\pkS, \skS)$ denote the public/private key pair of
the server for some additively homomorphic encryption scheme
$\left(\HHEnc{\cdot}, \HHDec{\cdot}\right)$. The message space $\M$
is unchanged.

In this case, the server needs to publish an encrypted version
$\HHEnc{\vec{\theta}}$ of its model. The client must therefore get a
copy of $\HHEnc{\vec{\theta}}$ once, but can then engage in the
protocol as many times as it wishes. One could also suppose that each
client receives a different encryption of $\vec{\theta}$ using a
server's encryption key specific to the client, or that a key rotation
is performed on a regular basis.
This protocol uses a mask $\mu$ which is chosen uniformly at random in 
$\M$. Consequently, it is important to see that~$\tstar$
($\equiv \thetax + \mu \pmod{M}$) is also uniformly distributed over 
$\M$. Thus, the server gains no bit of information from~$\tstar$.
The different steps are summarised
in \cref{fig:linear_regression_dual}.

\begin{figure}[h]
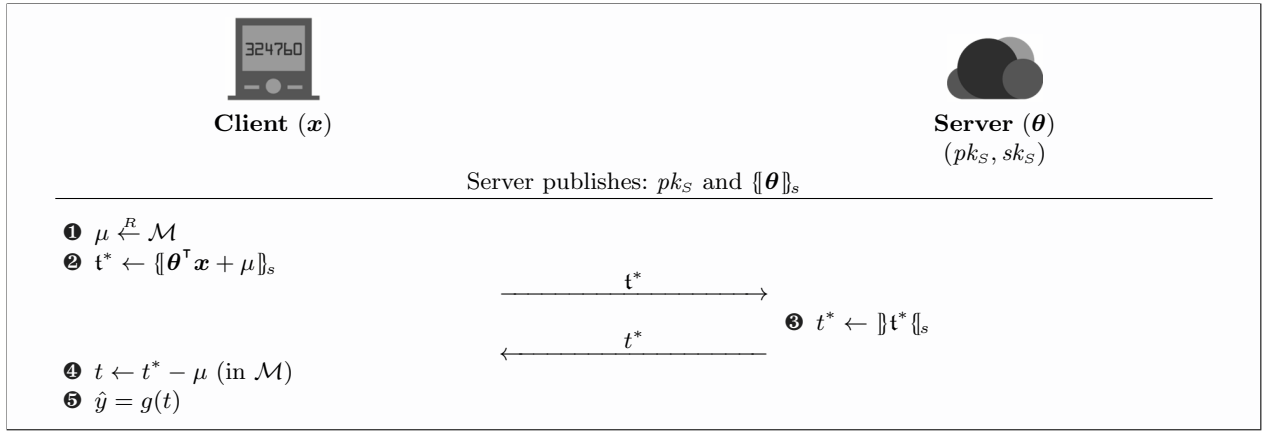

  \begin{protocol}
    \multicolumn{1}{c}{\textbf{Client} ($\vec{x}$)}
    && \multicolumn{1}{c}{\textbf{Server} ($\vec{\theta}$)}\\
    &&\multicolumn{1}{c}{$(\pkS, \skS)$}\\
    & \clap{Server publishes: $\pkS$ and $\HHEnc{\vec{\theta}}$}\\
    \hline
    \noalign{\medskip}
    \cding{1}\enspace $\mu \getsr \M$\\
    \cding{2}\enspace $\ct{t}^* \gets \HHEnc{\thetax +
    \mu}$\\[-1ex]
    & $\rarrow[\ct{t}^*]$\\
    && \cding{3}\enspace $\tstar \gets \HHDec{\ct{t}^*}$\\[-1ex]
    & $\larrow[\tstar]$\\[-1ex]
    \cding{4}\enspace $t \gets \tstar - \mu$ (in $\M$)\\
    \cding{5}\enspace $\yhat = g(t)$
  \end{protocol}
  \caption{Dual approach for privacy-preserving regression. Here,
    encryption is done using the server's public key $\pkS$ and noted
    $\HHEnc{\cdot}$. Function $g$ is the identity map for linear
    regression and the sigmoid function for logistic
    regression.}\label{fig:linear_regression_dual}
\end{figure}

\iflong
\subsubsection{Variant and Extensions}
In a variant, in Step~2 of \cref{fig:linear_regression_core} (resp. Step~3 of
\cref{fig:linear_regression_dual}), the server can add some noise $\epsilon$ by
defining $\ct{t}$ as
$\ct{t} \gets \HEnc{{\vec{\theta}}^\tran\vec{x} + \epsilon} =
\HEnc{\vec{\theta}^\tran\vec{x}} \boxplus \HEnc{\epsilon}$
(resp. $\tstar$ as $\tstar \gets \HHDec{\ct{t}^*} + \epsilon$). This
presents the advantage of limiting the leakage on $\vec{\theta}$
resulting from the output result.
\else
\subsubsection{Extensions}
\fi
The proposed methods are not limited to the identity map or the
sigmoid function but generalise to any injective function $g$.
\iflong
This includes the $\tanh$ activation function alluded to in
\cref{subsec:linear_models} where $g(t) = \tanh(t)$, as well as
\[
  \begin{array}{l@{\qquad}l}
    g(t) = \arctan(t) \quad[\text{arctan}]\thinspace,&
    g(t) = t/(1 + \abs{t}) \quad[\text{softsign}]\thinspace,\\
    g(t) = \ln(1 + e^t) \quad[\text{softplus}]\thinspace,&
    g(t) = \begin{cases} 0.01t & \text{for $t < 0$}\\
      t & \text{for $t \ge 0$}\end{cases} \quad[\text{leaky $\ReLU$}]\thinspace,
  \end{array}
\]
\else
This includes the
hyperbolic tangent function, the arctangent function, the softsign
function, the softplus  function, the leaky $\ReLU$ function,
\fi 
and more. For any injective function~$g$, there is no more information
leakage in returning $\thetax$ than $g(\thetax)$.

\subsection{Private SVM Classification}\label{subsec:svm_classification}
As discussed in \Cref{subsec:linear_models}, SVM inference can be abridged to
the evaluation of the sign of an inner product. However, the $\sign$
function is clearly not injective. Our idea is to make use of a
privacy-preserving comparison protocol. For concreteness, we consider
the DGK+ protocol; but any privacy-preserving comparison protocol
could be adapted.

\iflong
\subsubsection{A Na{\"{\i}}ve Protocol}
A client holding a private feature vector $\vec{x}$ wishes to evaluate
$\sign(\vec{\theta}^\tran \vec{x})$ where $\vec{\theta}$ parametrises
an SVM classification model. In the primal approach, the client can
encrypt $\vec{x}$ and send $\HEnc{\vec{x}}$ to the server. Next, the
server computes ${\HEnc{\eta} = \HEnc{\thetax + \mu}}$ for some random
mask~$\mu$ and sends~$\HEnc{\eta}$ to the client. The client
decrypts~$\HEnc{\eta}$ and recovers~$\eta$. Finally, the client and
the server engage in a private comparison protocol with respective
inputs $\eta$ and $\mu$, and the client deduces the sign of $\thetax$
from the resulting comparison bit $\pred{\mu \le \eta}$.

There are two issues. If we use the DGK+ protocol for the private
comparison, at least one extra exchange from the server to the client
is needed for the client to get $\pred{\mu \le \eta}$. This can be fixed
by considering the dual approach. A second, more problematic, issue
is that the decryption of $\HEnc{\eta} \coloneqq \HEnc{\thetax + \mu}$
yields $\eta$ as an element of $\M \cong \bbbz/M\bbbz$, which is not
necessarily equivalent to the \emph{integer} $\thetax + \mu$. Note
that if the inner product $\thetax$ can take any value in $\M$,
selecting a smaller value for $\mu \in \M$ to prevent the modular
reduction does not solve the issue because the value of $\eta$ may
then leak information on $\vec{\theta}^\tran \vec{x}$.
\fi

\subsubsection{Our Core Protocol}
\iflong
Instead, we suggest to select the message space much larger than the
upper bound $B$ on the inner product, so that the computation
will take place over the integers. 
\else
As explained in \cref{subsec:svm_classification_more}, a na\"{\i}ve
implementation for private SVM has important issues. To address them we
select the message space much larger than the upper bound $B$ on the inner
product.\par
\fi
Specifically, if $\thetax \in [-B, B]$ then, letting $\ell$ be the
bit-length of $B$, the message space
$\M = \{-\lfloor M/2\rfloor,\allowbreak \dots, \lceil M/2\rceil -1\}$
\iflong is \else should be \fi dimensioned such that
$M \geq 2^\ell(2^{\secpar}+1) - 1$ for some security
parameter~$\secpar$. Let $\mu$ be an $(\ell+\secpar)$\nobreakdash-bit
integer that is chosen such that $\mu \geq B$. By construction we will
then have $0 \leq \thetax + \mu < M$ so that the decrypted value
modulo $M$ corresponds to the actual integer value. As will become
apparent, this presents the further advantage of optimising the
bandwidth requirements: the number of exchanged ciphertexts depends on
the length of $B$ and not on the length of $M$ (notice that
$M = \#\M$).

Our resulting core protocol for private SVM classification of a
feature vector~$\vec{x}$ is illustrated in \cref{fig:svm_classification_core} and 
includes the following steps:
\begin{enumerate}[start=0]
\item The server publishes $\pkS$ and $\HHEnc{\vec{\theta}}$.
\item Let $\secpar$ be a security parameter. The client starts by
  picking uniformly at random in $[2^\ell - 1, 2^{\ell+\secpar})$ an 
  integer $\mu = \sum_{i=0}^{\ell+\secpar-1} \mu_i\,2^i$ .
\item In a second step, the client computes, over encrypted data, the
  inner product $\thetax$ and masks the result with  $\mu$ to get
  \iflong
    \[
    \ct{t}^*
    = \HHEnc{\theta_0} \boxplus
    \biggl(\boxsum_{j=1}^d x_j \odot
    \HHEnc{\theta_j}\biggr) \boxplus \HHEnc{\mu}\enspace.
  \]
  \else
  $\ct{t}^* = \HHEnc{\tstar}$ with $\tstar =  \thetax + \mu$.
  \fi

\item Next, the client individually encrypts the first $\ell$ bits of
  $\mu$ with its own encryption key to get $\HEnc{\mu_i}$, for
  $0 \le i \le \ell -1$, and sends $\ct{t}^*$ and the $\HEnc{\mu_i}$'s
  to the server.
\item Upon reception, the server decrypts $\ct{t}^*$ to get
  $\tstar \coloneqq \HHDec{\ct{t}^*} \bmod M = \vec{\theta}^\tran
  \vec{x} + \mu$ and defines the $\ell$-bit integer
  $\eta \coloneqq \tstar \bmod 2^\ell$.
\item The DGK+ protocol is now
  applied to two $\ell$-bit values
  $\ul{\mu} \coloneqq \mu \bmod 2^\ell =
  \sum_{i=0}^{\ell-1}\mu_i\,2^i$ and
  $\eta = \sum_{i=0}^{\ell-1}\eta_i\,2^i$. The server selects the
  $(\ell+1)$-th bit of $\tstar$ for $\deltaS$ (i.e.,
  $\deltaS = \lfloor \tstar/2^\ell\rfloor \bmod 2$), defines
  $s = 1 - 2\deltaS$, and forms the $\HEnc{h_i^*}$'s (with
  $-1 \le i \le \ell-1$) as defined by \cref{eq:hi*}. The server
  permutes randomly the $\HEnc{h_i^*}$'s and sends them to the client.
\item The client decrypts the $\HEnc{h_i^*}$'s and gets the $h_i^*$'s.
  If one of them is zero, it sets $\deltaC = 1$;
  otherwise it sets $\deltaC = 0$.
\item As a final step, the client obtains the predicted class as
  $\yhat = (-1)^{\neg(\deltaC \oplus \mu_\ell)}$, where $\mu_\ell$ denotes
  bit number~$\ell$ of $\mu$.
\end{enumerate}

\iflong
\begin{figure}
\else
\begin{figure}[hbt]
\fi
  \begin{protocol}
    \multicolumn{1}{c}{\textbf{Client} ($\vec{x}$)}
    && \multicolumn{1}{c}{\textbf{Server} ($\vec{\theta}$)}\\
    \multicolumn{1}{c}{$(\pkC, \skC)$}
    && \multicolumn{1}{c}{$(\pkS, \skS)$}\\
    & \clap{Server publishes: $\pkS$ and $\HHEnc{\vec{\theta}}$}\\
    \hline
    \noalign{\medskip}
    \cding{1}\enspace $\mu
    \begin{array}[t]{@{}l}
      {}\getsr [2^\ell-1,2^{\ell+\secpar})
      \iflong \\
      {}= \sum_{i=0}^{\ell+\secpar-1}\mu_i\,2^i
      \fi 
    \end{array}$\\
    \cding{2}\enspace $\ct{t}^* \gets
    \HHEnc{\thetax + \mu}$\\
    \cding{3}\enspace
    \begin{tabular}[t]{@{}l@{}}
      compute $\HEnc{\mu_i}$\iflong, \else\\ \quad\fi
      for $0 \leq i \leq \ell - 1$
    \end{tabular}\\[-1ex]
    & $\rarrow[\ct{t}^*,\, \HEnc{\mu_0}, \dots, \HEnc{\mu_{\ell-1}}]$\\[-1ex]
    &&\cding{4}\enspace
    \begin{tabular}[t]{@{}l}
      \textbullet\enspace  $t^* \gets \mathrlap{\HHDec{\ct{t}^*} \bmod M}$\\
      \textbullet\enspace  $\eta
          \begin{array}[t]{@{}l}
            {}\gets t^* \bmod 2^\ell
            \iflong \\
            {}= \sum_{i=0}^{\ell-1} \eta_i\,2^i
            \fi 
          \end{array}$
        \end{tabular}\\
    && \cding{5}\enspace
       \begin{tabular}[t]{@{}l@{}}
         \textbullet\enspace $\deltaS \gets \mathrlap{\lfloor
         t^*/2^\ell \rfloor \bmod 2}$\\ 
         \textbullet\enspace compute $\HEnc{h_i^*}$\iflong, \else\\ \qquad\fi
         for $\mathrlap{-1 \le i \le \ell-1}$          
        \end{tabular}\\[-1ex]
    & $\larrow[\{\HEnc{h_{-1}^*}, \dots, \HEnc{h_{\ell-1}^*}\}\\
      \text{in random order}]$\\[-1ex]
    \cding{6}\enspace $\deltaC \gets \pred[\bigl]{\exists i \mid h_i^* = 0}$\\
    \cding{7}\enspace
    $\hat{y} = (-1)^{\neg(\deltaC \oplus \mu_\ell)}$
  \end{protocol}
  \caption{Privacy-preserving SVM classification.
    Note that some data is encrypted
    using the client's public key~$\pkC$, while other, is encrypted
    using the server's public key~$\pkS$. They are noted
    $\HEnc{\cdot}$ and $\HHEnc{\cdot}$
    respectively.}\label{fig:svm_classification_core}
\end{figure}

Again, the proposed protocol keeps the number of interactions between
the client and the server to a minimum: a request and a response.

\paragraph*{Correctness}
\iflong
To prove the correctness, we need the two following simple lemmata.
\begin{lemma}\label{lem:lemma1}
  Let $a$ and $b$ be two non-negative integers. Then for any positive
  integer $n$,
  $\lfloor(a-b)/n\rfloor = \lfloor a/n \rfloor - \lfloor b/n \rfloor +
  \lfloor ((a \bmod n) - (b \bmod n))/n \rfloor$.
\end{lemma}

\begin{proof}
  Write $a = \bigl\lfloor \frac{a}{n} \bigr\rfloor n + (a \bmod n)$
  and $b = \bigl\lfloor \frac{b}{n} \bigr\rfloor n + (b \bmod n)$.
  Then
  $a - b = \bigl(\bigl\lfloor \frac{a}{n} \bigr\rfloor - \bigl\lfloor
  \frac{b}{n} \bigr\rfloor\bigr) n + ( a \bmod n) - (b \bmod
  n)$. Recalling that for $n_0 \in \bbbz$ and $x \in \bbbr$,
  $\lfloor x + n_0 \rfloor = \lfloor x \rfloor + n_0$ and
  $\lfloor - n_0 \rfloor = - \lfloor n_0 \rfloor$, the lemma follows
  by integer division through $n$.\qed
\end{proof}

\begin{lemma}\label{lem:lemma2}
  Let $a$ and $b$ be two non-negative integers smaller than some
  positive integer $n$. Then
  $\pred{b \le a} = 1 + \lfloor (a-b)/n \rfloor$.
\end{lemma}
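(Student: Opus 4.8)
The plan is to prove the identity by a direct case split on whether $b \le a$, exploiting the fact that the hypotheses $0 \le a,b \le n-1$ confine the difference $a-b$ to the interval $(-n,n)$. Within this range the floor $\lfloor (a-b)/n \rfloor$ can only take the values $0$ or $-1$, and I would show it equals $0$ exactly when $b \le a$ and $-1$ otherwise, which immediately yields $1 + \lfloor (a-b)/n \rfloor = \pred{b \le a}$.

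First I would treat the case $b \le a$. Here $a-b \ge 0$, and since $a \le n-1$ and $b \ge 0$ we also have $a-b \le a \le n-1 < n$; hence $0 \le a-b < n$ and $\lfloor (a-b)/n \rfloor = 0$. The right-hand side is therefore $1$, which matches $\pred{b \le a} = 1$.

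Next I would treat the complementary case $b > a$. Now $a-b \le -1 < 0$, while $a \ge 0$ and $b \le n-1$ give $a-b \ge -(n-1) > -n$; hence $-n < a-b < 0$ and $\lfloor (a-b)/n \rfloor = -1$. The right-hand side is then $1 + (-1) = 0$, matching $\pred{b \le a} = 0$.

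The only point requiring care---and the closest thing to an obstacle---is checking the boundary behaviour of the floor: one must confirm the strict bound $a-b > -n$ (which relies on $b \le n-1$) so that the floor is exactly $-1$ and not smaller, and likewise $a-b < n$ so that it is exactly $0$ in the first case. No appeal to \cref{lem:lemma1} is needed here, since for $a,b < n$ the quotients $\lfloor a/n \rfloor$ and $\lfloor b/n \rfloor$ both vanish and that lemma degenerates to a tautology; the self-contained sign analysis above is the cleanest route.
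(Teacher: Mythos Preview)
Your proof is correct and follows essentially the same approach as the paper: a two-case split on whether $b \le a$, bounding $a-b$ in $[0,n)$ or $(-n,0)$ respectively to conclude the floor is $0$ or $-1$. Your additional remarks on the boundary inequalities and the irrelevance of \cref{lem:lemma1} are fine but not needed---the paper's version is the same argument stated more tersely.
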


\begin{proof}
  By definition $0 \leq a < n$ and $0 \leq b < n$. If $b \le a$ then
  $0 \le \frac{a - b}{n} < 1$ and thus
  $\bigl\lfloor \frac{a-b}{n} \bigr\rfloor = 0$; otherwise, if $b > a$
  then $-1 < \frac{a - b}{n} < 0$ and so
  $\bigl\lfloor \frac{a-b}{n} \bigr\rfloor = -1$.\qed
\end{proof}
\fi

Remember that, by construction,
$\vec{\theta}^\tran \vec{x} \in [-B,B]$ with $B=2^\ell - 1$, that
$\mu \in [2^\ell-1,2^{\ell+\secpar})$, and by definition that
$\tstar \coloneqq \HHDec{\ct{t}^*} \bmod M$ with
$\ct{t}^* = \HHEnc{\thetax + \mu}$. Hence, in
Step~4, the server gets
$\tstar = \vec{\theta}^\tran \vec{x} + \mu \bmod M = \vec{\theta}^\tran
\vec{x} + \mu$ (over $\bbbz$) since
$0 \le \vec{\theta}^\tran \vec{x} + \mu \le 2^{\ell} - 1 +
2^{\ell+\secpar} - 1 < M$.
Let $\delta \coloneqq \deltaC \oplus \deltaS = \pred{\ul{\mu} \le \eta}$
(with $\ul{\mu} := \mu \bmod 2^\ell$ and $\eta := t^* \bmod{2^\ell}$) 
denote the result of the private comparison in Steps~5 and 6 with the
DGK+ protocol. 

Either of those two conditions holds true
\[
  \left\lbrace
  \begin{array}{@{}rll}
    0 \le &\thetax < 2^\ell
    &\iff 1 \le  \frac{\thetax + 2^\ell}{2^\ell} < 2
    \iff \bigl\lfloor \frac{\thetax +
      2^\ell}{2^\ell} \bigr\rfloor = 1\\ 
    -2^\ell < &\thetax < 0
    & \iff 0  < \frac{\thetax + 2^\ell}{2^\ell} < 1
    \iff \bigl\lfloor \frac{\thetax +
      2^\ell}{2^\ell} \bigr\rfloor = 0    
  \end{array}
  \right.\thinspace,
\]
and so,
\iflong
\begin{align*}
  \pred{\thetax \ge 0}
  &= \bigl\lfloor \tfrac{\thetax + 2^{\ell}}{2^\ell}
    \bigr\rfloor
    = \bigl\lfloor \tfrac{\tstar - \mu}{2^\ell} \bigr\rfloor + 1
  & \text{since $\tstar = \thetax + \mu$}\\
  &= \bigl\lfloor \tfrac{\tstar}{2^\ell} \bigr\rfloor
    - \bigl\lfloor \tfrac{\mu}{2^\ell} \bigr\rfloor
    + \bigl\lfloor \tfrac{\eta - \ul{\mu}}{2^\ell}\bigr\rfloor + 1
  & \text{by \Cref{lem:lemma1}}\\
  &= \bigl\lfloor \tfrac{\tstar}{2^\ell} \bigr\rfloor
    - \bigl\lfloor \tfrac{\mu}{2^\ell} \bigr\rfloor
      + \delta
  & \text{by \Cref{lem:lemma2}}\\
  &= \bigl(\bigl\lfloor \tfrac{\tstar}{2^\ell} \bigr\rfloor
    - \bigl\lfloor \tfrac{\mu}{2^\ell} \bigr\rfloor
    + \delta\bigr) \bmod 2
  & \text{since $\pred{\thetax \ge 0} \in \{0,1\}$}\\
  &= \bigl(\bigl\lfloor \tfrac{\mu}{2^\ell} \bigr\rfloor +
  \deltaC\bigr) \bmod 2 & \text{since $\deltaS = \lfloor t^*/2^\ell
                          \rfloor \bmod{2}$}\\
  &= \mu_\ell \oplus \deltaC\enspace.
\end{align*}
\else
since $\tstar = \thetax + \mu$, $\pred{\thetax \ge 0} \in \{0,1\}$, and
$\deltaS = \lfloor t^*/2^\ell \rfloor \bmod{2}$:
\begin{align*}
  \pred{\thetax \ge 0}
  &= \bigl\lfloor \tfrac{\thetax + 2^{\ell}}{2^\ell}
    \bigr\rfloor
    = \bigl\lfloor \tfrac{\tstar - \mu}{2^\ell} \bigr\rfloor + 1
  = \bigl\lfloor \tfrac{\tstar}{2^\ell} \bigr\rfloor
    - \bigl\lfloor \tfrac{\mu}{2^\ell} \bigr\rfloor
    + \bigl\lfloor \tfrac{\eta - \ul{\mu}}{2^\ell}\bigr\rfloor + 1\\
  &= \bigl\lfloor \tfrac{\tstar}{2^\ell} \bigr\rfloor
    - \bigl\lfloor \tfrac{\mu}{2^\ell} \bigr\rfloor
      + \delta
  = \bigl(\bigl\lfloor \tfrac{\tstar}{2^\ell} \bigr\rfloor
    - \bigl\lfloor \tfrac{\mu}{2^\ell} \bigr\rfloor
    + \delta\bigr) \bmod 2\\
  &= \bigl(\bigl\lfloor \tfrac{\mu}{2^\ell} \bigr\rfloor +
  \deltaC\bigr) \bmod 2
  = \mu_\ell \oplus \deltaC\enspace.
\end{align*}
\fi
Now, noting
$\sign(\thetax) = 
(-1)^{\neg\pred{\vec{\theta}^\tran\vec{x} \ge 0}}$, we get the desired
result.

\paragraph*{Security}
The security of the protocol of \cref{fig:svm_classification_core} follows from the
fact that the inner product $\thetax$ is statistically masked by the
random value $\mu$. Security parameter $\secpar$ guarantees that
the probability of an information leak due to a carry is
negligible. The security also depends on the security of the DGK+
comparison protocol, which is provably secure (cf. \cref{rem:DGKsec}).

\subsubsection{A Heuristic Protocol}
The previous protocol, thanks to the use of the DGK+ algorithm offers
provable security guarantees but incurs the exchange of
${2(\ell+1)}$ ciphertexts. Here we aim to reduce the number of
ciphertexts and introduce a new heuristic protocol.
This protocol requires the introduction of a
signed factor~$\lambda$, such that $\abs{\lambda} > \abs{\mu}$, and we  
now use both $\mu$ and $\lambda$ to mask the model. To ensure that
${\lambda \thetax + \mu}$ remains within the message space, we pick $\lambda$
in $\B$ where
\iflong
\[ 
  \B \coloneqq \left[-\left\lceil\frac{\lceil M/2
      \rceil}{B+1}\right\rceil, \left\lfloor\frac{\lceil M/2
      \rceil}{B+1}\right\rfloor \right]\enspace.
\]
\else
$\B \coloneqq \left[-\left\lceil\frac{\lceil M/2
      \rceil}{B+1}\right\rceil, \left\lfloor\frac{\lceil M/2
      \rceil}{B+1}\right\rfloor \right]$.
\fi
Furthermore, to ensure the effectiveness of the masking, $\B$ should
be sufficiently large; namely, $\#\B > 2^\secpar$ for a security
parameter $\secpar$, hence ${M > 2^\ell(2^\kappa - 1)}$.

The protocol, which is illustrated in \cref{fig:svm_classification_heuristic}, runs as
follows:
\begin{enumerate}
\item The client encrypts its input data $\vec{x}$ using its public
  key, and sends its key and the encrypted data to the server.
\item The server draws at random a signed scaling factor
  $\lambda \in \B$, $\lambda \neq 0$, and an offset factor
  $\mu \in \B$ such that $\abs{\mu} < \abs{\lambda}$ and
  $\sign(\mu) = \sign(\lambda)$. The server then defines the bit
  $\deltaS$ such that $\sign(\lambda) = (-1)^\deltaS$ and computes an
  encryption $\ct{t}^*$ of the shifted and scaled inner product
  $\tstar = (-1)^\deltaS \cdot (\lambda \vec{\theta}^\tran \vec{x} +
  \mu)$ \iflong as
  \[
    \ct{t}^* = \HEnc[\big]{(-1)^\deltaS\,\mu} \boxplus
    \boxsum_{i=0}^d \bigl((-1)^\deltaS\lambda \,
    \theta_i\bigr) \odot \HEnc{x_i}\thinspace,
  \]
  \fi 
  and sends $\ct{t}^*$ to the client.\iflong\footnote{Note that instead, one
    could define $\lambda, \mu \getsr \B$ with $\lambda > 0$ and
    $\abs{\mu} < \lambda$, and $t^* = \lambda\thetax + \mu$. We
    however prefer the other formulation as it easily generalises to
    extended settings (see \cref{subsec:sign_activation}).}\fi

\item In the final step, the client decrypts $\ct{t}^*$ using
  its private key, recovers $\tstar$ as a signed integer of $\M$, and
  deduces the class of the input data as $\yhat = \sign(t^*)$.
\end{enumerate}

\begin{figure}[h]
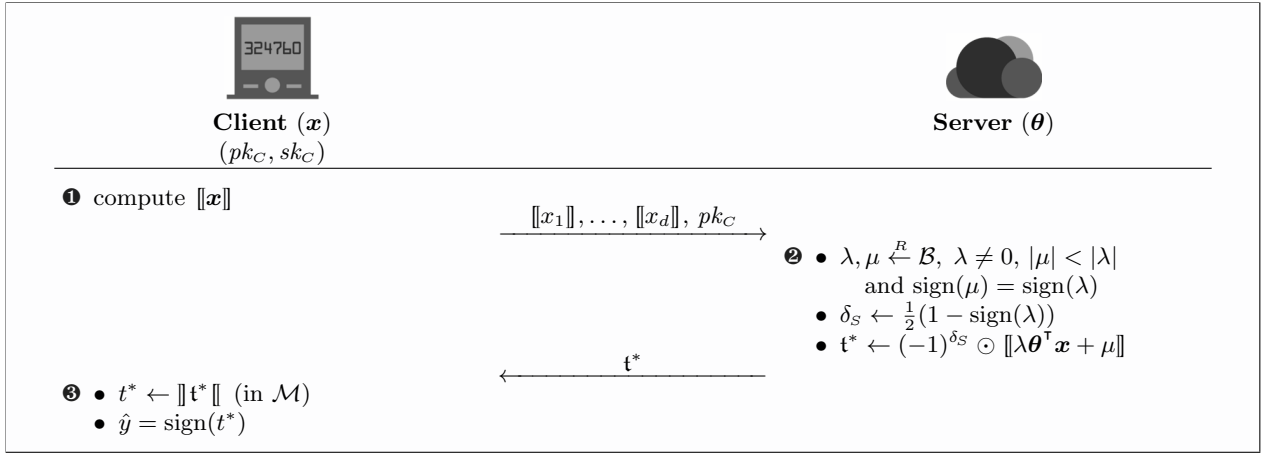

  \begin{protocol}
    \multicolumn{1}{c}{\textbf{Client} ($\vec{x}$)}
    && \multicolumn{1}{c}{\textbf{Server} ($\vec{\theta}$)}\\
    \multicolumn{1}{c}{$(\pkC, \skC)$}
    && \multicolumn{1}{c}{}\\
    \hline
    \noalign{\medskip}
    \cding{1}\enspace compute $\HEnc{\vec{x}}$\\[-1ex]
    & $\rarrow[\HEnc{x_1}, \dots, \HEnc{x_d},\, \pkC]$\\[-1ex]
    && \cding{2}\enspace
    $\begin{array}[t]{@{}l@{}}
       \bullet\enspace \lambda, \mu \getsr \B,\;
       \mathrlap{\lambda\neq0,\,\abs{\mu}<\abs{\lambda}}\\
       \qquad \text{and }\mathrlap{\sign(\mu) = \sign(\lambda)}\\
       \bullet\enspace \deltaS \gets \frac12(1-\sign(\lambda))\\
       \bullet\enspace  \ct{t}^* \gets \mathrlap{(-1)^\deltaS \odot
       \HEnc{\lambda\vec{\theta}^\tran \vec{x} + \mu}}
    \end{array}$\\[-1ex]
    & $\larrow[\ct{t}^*]$\\[-1ex]
    \cding{3}\enspace
    $\begin{array}[t]{@{}l@{}}
    \bullet\enspace t^* \gets \HDec{\ct{t}^*} \enspace\text{(in $\M$)}\\
    \bullet\enspace \yhat = \sign(t^*)
    \end{array}$
  \end{protocol}
  \caption{Heuristic protocol for privacy-preserving
    SVM classification.}\label{fig:svm_classification_heuristic}
\end{figure}

\paragraph*{Correctness}
The constraint $\abs{\mu} < \abs{\lambda}$ with $\lambda \neq 0$
ensures that $\hat{y} = \sign(\thetax)$. Indeed, as
$(-1)^\deltaS = \sign(\lambda) = \sign(\mu)$, we have
$t^* = (-1)^\deltaS(\lambda\thetax + \mu) = \abs{\lambda}\thetax +
\abs{\mu} = \abs{\lambda}(\thetax + \epsilon)$ with
$\epsilon \coloneqq \abs{\mu}/\abs{\lambda}$. Hence, whenever
$\thetax \neq 0$, we get
$\hat{y} = \sign(t^*) = \sign(\thetax + \epsilon) = \sign(\thetax)$
since $\abs{\thetax} \ge 1$ and
$\abs{\epsilon} = \abs{\mu}/\abs{\lambda} < 1$. If $\thetax = 0$ then
$\hat{y} = \sign(\epsilon) = 1$.

\paragraph*{Security}
We stress that the private comparison protocol we use in
\cref{fig:svm_classification_heuristic} does not come with formal security guarantees.
In particular, the client learns the value of
$t^* = \lambda\vec{\theta}^\tran\vec{x} + \mu$ with
$\lambda, \mu \in \B$ and $\abs{\mu} < \abs{\lambda}$. Some
information on $t \coloneqq \vec{\theta}^\tran\vec{x}$ may be leaking
from $t^*$ and, in turn, on $\vec{\theta}$ since $\vec{x}$ is known to
the client. The reason resides in the constraint
$\abs{\mu} < \abs{\lambda}$. So, from
$t^* = \lambda\vec{\theta}^\tran\vec{x} + \mu$, we deduce
$\log \abs{t^*} \le \log\abs{\lambda} + \log{(\abs{t} + 1)}$. For
example, when $t$ has two possible very different ``types'' of values
(say, very large and very small), the quantity $\log \abs{t^*}$ can
be enough to discriminate with non-negligible probability the type
of $t$. This may possibly leak information on $\vec{\theta}$. That
does not mean that the protocol is necessarily insecure but it should
be used with care.

\iflong
\begin{remark}
  The bandwidth usage could be even reduced to one ciphertext and a
  single bit with the dual approach. From the published encrypted
  model $\HHEnc{\vec{\theta}}$, the client could homomorphically
  compute and send to the server
  $\ct{t}^* = \HHEnc{\lambda\vec{\theta}^\tran\vec{x} + \mu}$ for
  random $\lambda, \mu \in \B$ with $\abs{\mu} < \abs{\lambda}$. The
  server would then decrypt $\ct{t}^*$, obtain~$t^*$, compute
  $\deltaS = \frac12(1-\sign(t^*))$, and return $\deltaS$ to the
  client. Analogously to the primal approach, the output class
  $\yhat = \sign(\vec{\theta}^\tran\vec{x})$ is obtained by the client
  as $\yhat = (-1)^\deltaS\cdot\sign(\lambda)$. However, and
  contrarily to the primal approach, the potential information leakage
  resulting from $t^*$---in this case on $\vec{x}$---is now on the
  server's side, which is in contradiction with our
  Requirement~\ref{itm:data_conf} (input confidentiality). We do not
  further discuss this variant.
\end{remark}
\fi

\section{Application to Neural Networks}\label{sec:application}

Typical \emph{feed-forward neural networks} are represented as large
graphs. Each node on the graph is often called a \emph{unit}, and
these units are organised into layers. At the very bottom is the input
layer with a unit for each of the coordinates $x_j^{(0)}$ of the input
vector $\vec{x}^{(0)} \coloneqq \vec{x} \in \X$. Then various
computations are done in a bottom-to-top pass and the output
$\yhat \in \Y$ comes out all the way at the very top of the graph.
Between the input and output layers, a number of \emph{hidden layers}
are evaluated. We index the layers with a superscript $(l)$, where
$l=0$ for the input layer and $1\leq l < L$ for the hidden
layers. Layer $L$ corresponds to the output. Each unit of each layer
has directed connections to the units of the layer below; see
\cref{fig:ffnn}.

\iflong
\begin{figure}[h]
  \begin{minipage}{\textwidth}
    \centering
    \begin{tikzpicture}[scale=1, transform shape,
  circ/.style={draw, circle,
    fill=antiquewhite,
    minimum size=5mm}
  ]
  \matrix[column sep=5mm]
  {
    \node[right] (l) {Hidden layer $l$};
    && \node{$\cdots$};
    & \node[circ,label=above:{$x_{j-1}^{\mathrlap{(l)}}$}] {};
    & \node[circ,label=above:{$x_{j}^{\mathrlap{(l)}}$},fill=lemonchiffon] (out) {};
    & \node[circ,label=above:{$x_{j+1}^{\mathrlap{(l)}}$}] {};    
    & \node{$\cdots$};\\[1.2cm]
    \node[right] (l-1) {Hidden layer $l - 1$};
    & \node[circ,label=below:{$x_{1}^{\mathrlap{(l-1)}}$}] (x1) {};
    & \node[circ,label=below:{$x_{2}^{\mathrlap{(l-1)}}$}] (x2) {};
    && \node{$\dots$};&
    & \node[circ,label=below:{$x_{d_{l-1}}^{\mathrlap{(l-1)}}$}] (xd) {};\\
  };
  \draw[-latex] (x1) -- (out) node[midway, above left, sloped]{$\theta_{j,1}^{(l)}$};
  \draw[-latex] (x2) -- (out) node[midway, below, sloped]{$\theta_{j,2}^{(l)}$};
  \draw[-latex] (xd) -- (out) node[midway, above right, sloped]{$\theta_{j,d_{l-1}}^{(l)}$};
  \draw[very thick, shorten <= .5ex, shorten >= .5ex, ->] (l-1) -- (l-1 |- l.south);
\end{tikzpicture}
    \subcaption{Going from layer $l-1$ to layer $l$.}\label{fig:ffnn}
  \end{minipage}\par\medskip
  
  \begin{minipage}{\textwidth}
    \centering
    \begin{tikzpicture}[scale=1, transform shape,
      circ/.style={draw, circle,
        inner sep=2pt,
        font=\Huge},
      squa/.style={
        draw,
        inner sep=4pt,
        font=\large,
        join = by -latex},
      start chain=bl,node distance=10mm
      ]
      \node[on chain=bl,right=7.5mm] (xbl) {$\vdots$};
      \node[on chain=bl] {\hphantom{$\theta_i^{(l)}$}};
      \node[on chain=bl,circ,fill=pearl] (sigma) {$\mathrm{\Sigma}$};
      \node[on chain=bl,squa,fill=antiquewhite,label=above:{\parbox{2cm}{\centering Activation\\[-2pt] function}}]{$g_j^{(l)}$};
      \node[on chain=bl,label=above:Output,join=by -latex]{$x_j^{(l)}$};
      \begin{scope}[start chain=1]
        \node[on chain=1] at (0,1cm) (x1) {$x_1^{(l-1)}$};
        \node[on chain=1,label=above:Weights,join=by o-latex] (w1) {$\theta_{j,1}^{(l)}$};
      \end{scope}
      \begin{scope}[start chain=2]
        \node[on chain=2] at (0,.5cm) (x2) {$x_2^{(l-1)}$};
        \node[on chain=2,join=by o-latex] (w2) {$\theta_{j,2}^{(l)}$};
      \end{scope}
      \begin{scope}[start chain=d]
        \node[on chain=d] at (0,-.75cm) (xd) {$x_{d_{l-1}}^{(l-1)}$};
        \node[on chain=d,join=by o-latex] (wd) {$\theta_{j,d_{l-1}}^{(l)}$};
      \end{scope}
      \node[above=2.5mm] at (sigma|-w1) (b) {\parbox{2cm}{\centering Bias\\[2pt]$\theta_{j,0}^{(l)}$}}; 
      \draw[-latex] (w1) -- (sigma);
      \draw[-latex] (w2) -- (sigma);
      \draw[-latex] (wd) -- (sigma);
      \draw[o-latex] (b) -- (sigma);
      \draw[decorate,decoration={brace,mirror}] (x1.north west) --
      node[left=10pt,rotate=90,anchor=center] {Inputs} (xd.south west);
    \end{tikzpicture}
    \subcaption{Zoom on computing unit $j$ in layer $l$.}\label{fig:neuron}
  \end{minipage}
  \caption{Relationship between a hidden unit in layer $l$ and the
    hidden units of layer~$l-1$ in a feed-forward neural
    network.}
\end{figure}

\Cref{fig:neuron} details the outcome $x_j^{(l)}$ of the
$j$\textsuperscript{th} computing unit in layer $l$.
\else
\begin{figure}[b!]
  \centering
  \begin{tikzpicture}[scale=1, transform shape,
    circ/.style={draw, circle,
      fill=antiquewhite,
      minimum size=5mm}
    ]
    \matrix[column sep=5mm]
    {
      \node[right] (l) {Hidden layer $l$};
      && \node{$\cdots$};
      & \node[circ,label=above:{$x_{j-1}^{\mathrlap{(l)}}$}] {};
      & \node[circ,label=above:{$x_{j}^{\mathrlap{(l)}}$},fill=lemonchiffon] (out) {};
      & \node[circ,label=above:{$x_{j+1}^{\mathrlap{(l)}}$}] {};    
      & \node{$\cdots$};\\[1.2cm]
      \node[right] (l-1) {Hidden layer $l-1$};
      & \node[circ,label=below:{$x_{1}^{\mathrlap{(l-1)}}$}] (x1) {};
      & \node[circ,label=below:{$x_{2}^{\mathrlap{(l-1)}}$}] (x2) {};
      && \node{$\dots$};&
      & \node[circ,label=below:{$x_{d_{l-1}}^{\mathrlap{(l-1)}}$}] (xd) {};\\
    };
    \draw[-latex] (x1) -- (out) node[midway, above left, sloped]{$\theta_{j,1}^{(l)}$};
    \draw[-latex] (x2) -- (out) node[midway, below, sloped]{$\theta_{j,2}^{(l)}$};
    \draw[-latex] (xd) -- (out) node[midway, above right, sloped]{$\theta_{j,d_{l-1}}^{(l)}$};
    \draw[very thick, shorten <= .5ex, shorten >= .5ex, ->] (l-1) -- (l-1 |- l.south);
  \end{tikzpicture}
  \caption{Relationship between a hidden unit in layer $l$ and the
    hidden units of layer~$l-1$ in a feed-forward neural
    network.}\label{fig:ffnn}
\end{figure}
\fi
We keep the convention $x_0^{(l)} \coloneqq 1$ for all layers. If we
note $\vec{\theta_j}^{(l)}$ the vector of weight coefficients
$\theta_{j,k}^{(l)}$, $0 \leq k \leq d_{l-1}$, where $d_{l}$ is the
number of units in layer $l$, then $x_j^{(l)}$ can be expressed as:
\iflong
\begin{align}\label{eq:unitj}
  x^{(l)}_j
  &= g_j^{(l)}\Bigl(\bigl(\vec{\theta_j}^{(l)}\bigr)^{\!\tran}
    \vec{x}^{(l-1)}\Bigr)\notag\\  
  &= g_j^{(l)}\Bigl(\theta_{j,0}^{(l)} +
  \textstyle\sum_{k=1}^{d_{l-1}} \theta_{j,k}^{(l)}\,x_k^{(l-1)}\Bigr)
    \thinspace, \quad 1 \leq j \leq d_{l},\enspace 1 \leq l \leq L\enspace.
\end{align}
\else
\begin{equation}\label{eq:unitj}
  x^{(l)}_j
  = g_j^{(l)}\Bigl(\bigl(\vec{\theta_j}^{(l)}\bigr)^{\!\tran}
    \vec{x}^{(l-1)}\Bigr)\thinspace, \quad 1 \leq j \leq d_{l}\enspace.
\end{equation}
\fi 
\iflong
Functions $g_j^{(l)}$ are non-linear functions such as the $\sign$
function or the Rectified Linear Unit ($\ReLU$) function
\[
  \begin{array}{l@{\hspace{5em}}l}
    t \mapsto 
    \begin{cases}
      t & \text{if $t \geq 0$}\\
      0 & \text{otherwise}
    \end{cases}&
    \begin{tikzpicture}[scale=.75,baseline={(current bounding box.center)}]
      \draw[->] (-1,0) -- (1.2,0) node[below right] {$t$};
      \draw[->] (0,-.3) -- (0,1.2) node[left] {$g(t)$};
      \draw[lightsalmon,very thick] (-1,0) -- (0,0) -- (1,1);
    \end{tikzpicture}
  \end{array}
\]
\else 
Functions $g_j^{(l)}$ are non-linear functions such as the $\sign$
function or the Rectified Linear Unit ($\ReLU$) function (see \Cref{subsec:relu_activation}).
\fi 
Those functions are known as \emph{activation functions}.
\iflong
  Other
examples of activation functions are defined in \cref{subsec:plogreg}.
\fi 
The weight coefficients characterise the model and are known only to
the owner of the model. Each hidden layer depends on the layer below,
and ultimately on the input data $\vec{x}^{(0)}$, known solely to the
client.

\paragraph*{Generic Solution}
A generic solution can easily be devised from~\cref{eq:unitj}: for
each inner product computation, and therefore for each unit of each
hidden layer, the server computes the encrypted inner product and the
client computes the output of the activation function in the
clear. In more detail, the evaluation of a neural network can go as follows.
\begin{enumerate}[start=0]
\item The client starts by encrypting its input data and sends it to
  the server.
\item Then, as illustrated in \cref{fig:ffnn_generic}, for each hidden
  layer $l$, $1 \le l < L$:
  \begin{enumerate}
  \item The server computes $d_l$ encrypted inner products $\ct{t}_j$
    corresponding to each unit $j$ of the layer and sends those to the
    client.
  \item The client decrypts the inner products, applies the required
    activation function $g_j^{(l)}$, re-encrypts, and sends back $d_l$
    encrypted values.
  \end{enumerate}
\item During the last round ($l=L)$, the client simply decrypts the
  $\ct{t}_j$ values and applies the corresponding activation function
  $g_j^{(L)}$ to each unit $j$ of the output layer. This is the
  required result.
\end{enumerate}

\begin{figure}[h]
  \begin{protocol}
    \multicolumn{1}{c}{\textbf{Client} ($\vec{x}^{(0)}$)}
    && \multicolumn{1}{c}{\textbf{Server}
      ($\smash{\{\vec{\theta_{j_{\mkern-2mu i}}}^{\!(i)}\}_{\substack{%
            \scriptscriptstyle 1 \le i \le L\\
            \scriptscriptstyle 0 \le j_{\mkern-1.7mu i} \le
            d_{\mkern-1mu i}}}}$)}\\
    \multicolumn{1}{c}{$(\pkC, \skC)$}
    && \multicolumn{1}{c}{}\\
    \hline
    \noalign{\medskip}
    \noalign{\vskip3ex}
    &&\tikzmarknode{enternode}{$\HEnc{\vec{x}} \gets \HEnc{\vec{x}^{(l-1)}}$}\\
    \noalign{\medskip}
    && 
       $\begin{array}[t]{@{}l@{}}
          \text{\textbf{for} $j = 1$ to $d_{l}$ \textbf{do}}\\
          \quad\ct{t}_j^{(l)} \gets
          \HEnc[\big]{\bigl(\vec{\theta_j}^{(l)}\bigr)^{\!\!\tran}\vec{x}}\\
          \text{\textbf{endfor}}
        \end{array}$\\[-3ex]
    & $\larrow[\ct{t}_1^{(l)}, \dots, \ct{t}_{d_{l}}^{(l)}]$\\[-3ex]
    $\begin{array}[b]{@{}l@{}}
       \text{\textbf{for} $j = 1$ to $d_{l}$ \textbf{do}}\\
       \quad t_j^{(l)} \gets \HDec{\ct{t}_j^{(l)}}\\
       \quad x_j^{(l)} \gets g_j^{(l)}(t_j)\\
       \text{\textbf{endfor}}
     \end{array}$\\[-4ex]
    & $\rarrow[\HEnc{x_1^{(l)}}, \dots,
    \HEnc{x_{d_{l}}^{(l)}}]$\\[-4ex]
    \noalign{\smallskip}
    && \tikzmarknode{exitnode}{\quad$l \gets l+1$}\\
    \noalign{\vskip3ex}
  \end{protocol}
  \begin{tikzpicture}[overlay,remember picture]
    \draw[dashed,shorten <= 1ex,shorten >= .5ex,-latex]
    (exitnode.south) |- ++(24ex,-3ex) |- ([yshift=3ex]enternode.north) -- (enternode);
  \end{tikzpicture}
  \caption{Generic solution for privacy-preserving evaluation of
    feed-forward neural networks. Evaluation of hidden layer
    $l$.}\label{fig:ffnn_generic}
\end{figure}
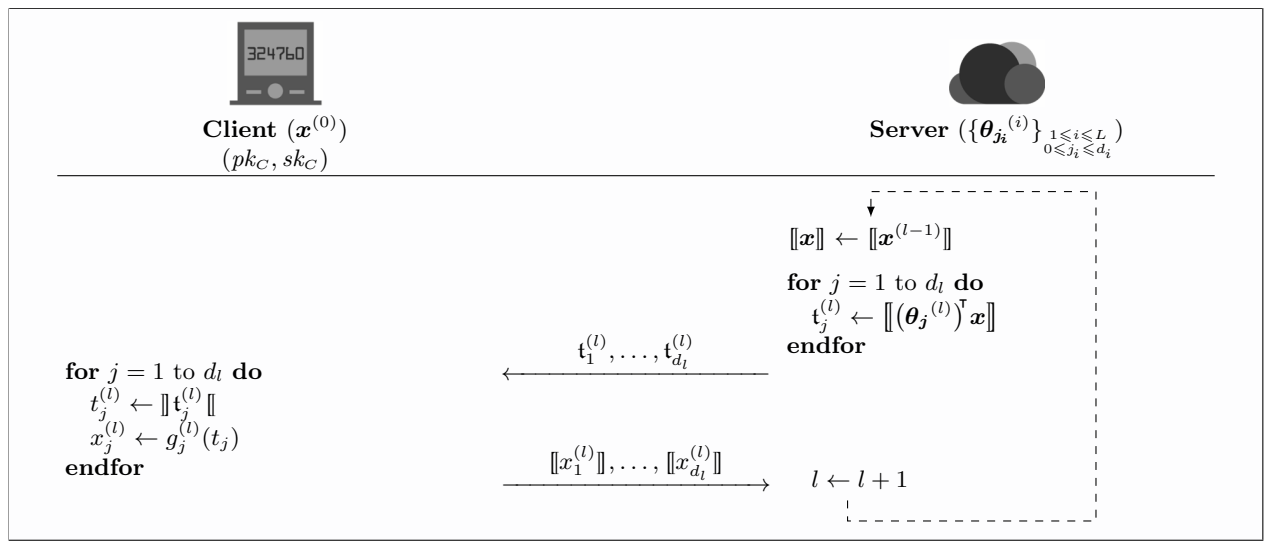

For each hidden layer $l$, exactly two messages (each comprising
$d_l$~encrypted values) are exchanged. The input and output layers
only involve one exchange; from the client to the server for the input
layer and from the server back to the client for the output layer.

\iflong
Several variations are considered in~\cite{BOP06}. For increased
security, provided that the units feature the same type of activation
functions in a given layer $l$ (i.e.,
$g_1^{(l)} = g_2^{(l)} = \dots = g_{d_l}^{(l)}$), the server may first
apply a random permutation on all units (i.e., sending the
$\ct{t}_j$'s in a random order). It then recovers the correct
ordering by applying the inverse permutation on the received
$\HEnc{x_j^{(l)}}$'s.
The server may also want to hide the activation
functions. In this case, the client holds the raw signal
$t_j \coloneqq t_j^{(l)} = (\vec{\theta_j}^{(l)})^{\!\tran}\vec{x}$
and the server the corresponding activation function $g_j^{(l)}$. The
suggestion of~\cite{BOP06} is to approximate the activation function
as a polynomial and to rely on oblivious polynomial
evaluation~\cite{NP06} for the client to get
$x_j^{(l)} \approx P_j^{(l)}(t_j)$ without learning polynomial
$P_j^{(l)}$ approximating $g_j^{(l)}$. Finally, the server may desire
not to disclose the topology of the network. To this end, the server
can distort the client's perception by adding dummy units and/or
layers.\par\medskip
\fi

In the following two sections, we improve this generic solution for two
popular activation functions: the $\sign$ and the $\ReLU$ functions.
In the new proposed implementations, everything is kept
encrypted---from start to end. The raw signals are hidden from the
client's view in all intermediate computations.

\subsection{Sign Activation}\label{subsec:sign_activation}
Binarised neural networks implement the sign function as activation
function. This is very advantageous from a hardware
perspective~\cite{HCS+16}. 

\Cref{subsec:svm_classification} describes two protocols for the client to get
the sign of $\vec{\theta}^\tran\vec{x}$. In order to use them for
binarised neural networks in a setting similar to the generic solution,
the server needs to get an encryption of $\sign(\thetax)$ for each
computing unit $j$ in layer $l$ under the client's key from
$\HEnc{\vec{x}}$, where $\HEnc{\vec{x}} \coloneqq 
\HEnc{\vec{x}^{(l-1)}}$ is the encrypted output of layer $l-1$ and 
$\vec{\theta} \coloneqq \vec{\theta_j}^{(l)}$ is the parameter vector 
for unit $j$ in layer $l$.

We start with the core protocol of~\cref{fig:svm_classification_core}. It runs in
dual mode and therefore uses the server's encryption. Exchanging the
roles of the client and the server almost gives rise to the
sought-after protocol. The sole extra change is to ensure that the
server gets the classification result encrypted. This can be achieved
by masking the value of $\deltaC$ with a random bit $b$ and sending an
encryption of $(-1)^b$. The resulting protocol is depicted in
\cref{fig:ffnn_sign}.

\iflong
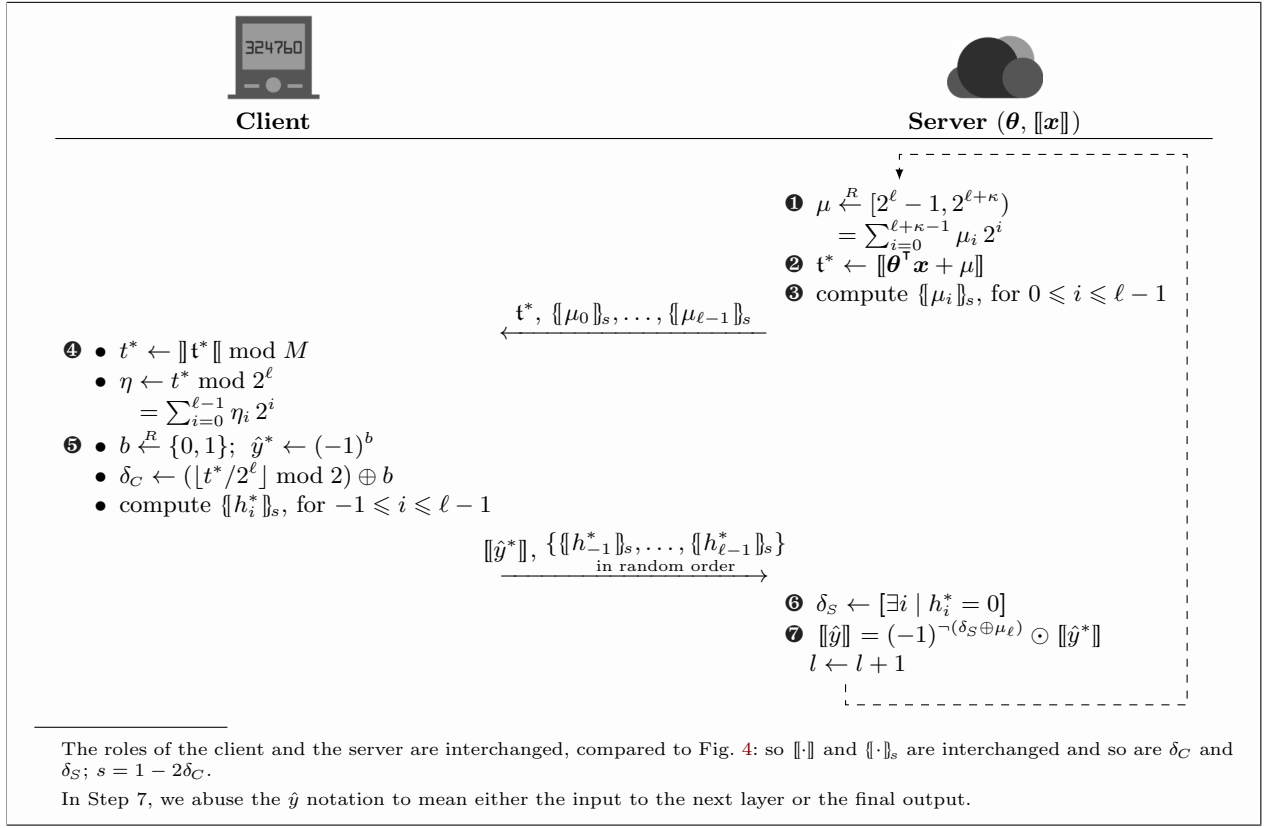
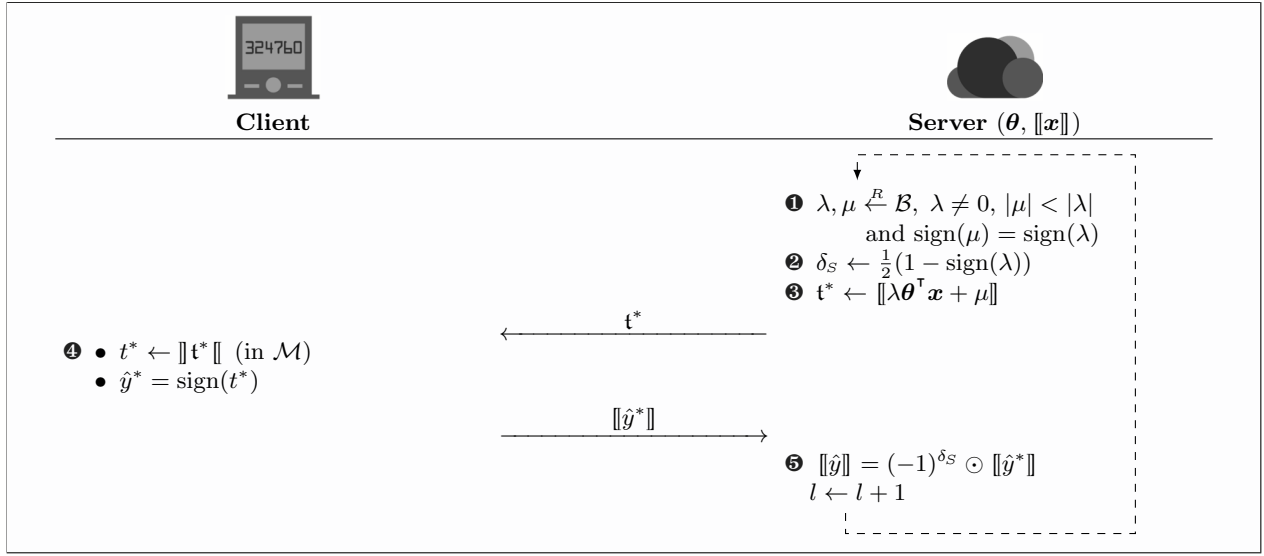
\begin{figure}[p]
\else
\begin{figure}[b!]
\fi
  \begin{minipage}{\textwidth}
    \centering  
    \begin{protocol}
      \multicolumn{1}{c}{\textbf{Client}} && \multicolumn{1}{c}{\textbf{Server}
        ($\vec{\theta}, \HEnc{\vec{x}}$)}\\
      \hline
      \noalign{\medskip}
      \noalign{\vskip3ex}
      && \tikzmarknode{enternode}{
        \cding{1}\enspace$\mu
      \begin{array}[t]{@{}l}
        {}\getsr [2^\ell-1,2^{\ell+\secpar}) \iflong \\
        {}= \sum_{i=0}^{\ell+\secpar-1}\mu_i\,2^i \fi 
      \end{array}$}\\
      && \cding{2}\enspace$\ct{t}^* \gets
      \HEnc{\vec{\theta}^\tran\vec{x} + \mu}$\\ 
      && \cding{3}\enspace
      \begin{tabular}[t]{@{}l@{}}
        compute $\HHEnc{\mu_i}$\iflong, \else\\ \quad\fi
        for $0 \leq i \leq \ell - 1$
      \end{tabular}\\[-1ex]
      & $\larrow[\ct{t}^*,\, \HHEnc{\mu_0}, \dots,
      \HHEnc{\mu_{\ell-1}}]$\\[-1ex]
      \cding{4}\enspace
      \begin{tabular}[t]{@{}l}
        \textbullet\enspace  $t^* \gets
        \mathrlap{\HDec{\ct{t}^*} \bmod M}$\\ 
        \textbullet\enspace  $\eta
        \begin{array}[t]{@{}l}
          {}\gets t^* \bmod 2^\ell \iflong \\
          {}= \sum_{i=0}^{\ell-1} \eta_i\,2^i \fi 
        \end{array}$
      \end{tabular}\\
      \cding{5}\enspace
      \begin{tabular}[t]{@{}l@{}}
        \textbullet\enspace $b \getsr \{0,1\}$;\; $\mathrlap{\hat{y}^*
        \gets (-1)^b}$\\ 
        \textbullet\enspace $\mathrlap{\deltaC \gets (\lfloor
        t^*/2^\ell \rfloor \bmod 2) \oplus b}$\\ 
        \textbullet\enspace compute\footnotetext{\scriptsize The roles
        of the client and the server are interchanged, compared 
        to \cref{fig:svm_classification_core}:  so $\HEnc{\cdot}$ and
        $\HHEnc{\cdot}$ are interchanged and so 
        are $\deltaC$ and $\deltaS$; $s = 1 - 2\deltaC$.\par\smallskip
        
        \noindent In Step~7, we abuse the $\yhat$
        notation to mean either the input to the next layer or the
        final output.} 
        $\HHEnc{h_i^*}$\iflong, \else\\ \qquad\fi
        for $\mathrlap{-1 \le i \le \ell-1}$
      \end{tabular}\\
      \noalign{\smallskip}
      & $\rarrow[\HEnc{\hat{y}^*},\,\substack{\textstyle\{\HHEnc{h_{-1}^*},
        \dots, \HHEnc{h_{\ell-1}^*}\}\\ 
        \text{in random order}}]$\\
      &&\cding{6}\enspace $\deltaS \gets
      \mathrlap{\pred[\bigl]{\exists i \mid h_i^* = 
          0}}$\\ 
      &&\cding{7}\enspace
      $\HEnc{\hat{y}} = \mathrlap{(-1)^{\neg(\deltaS\oplus \mu_\ell)} \odot
        \HEnc{\hat{y}^*}}$\\
      && \tikzmarknode{exitnode}{\quad$l \gets l+1$}\\
      \noalign{\vskip3ex}
    \end{protocol}
    \begin{tikzpicture}[overlay,remember picture]
      \draw[dashed,shorten <= 1ex,shorten >= .5ex,-latex]
      (exitnode.south) |- ++(33ex,-3ex) |-
      ([yshift=3ex]enternode.north) -- (enternode); 
    \end{tikzpicture}
    \iflong\subcaption{Core version.}\label{fig:ffnn_sign}\fi
  \end{minipage}
  \iflong
  \par\vskip\bigskipamount
    \begin{minipage}{\textwidth}
      \centering  
      \begin{protocol}
        \multicolumn{1}{c}{\textbf{Client}} &&
        \multicolumn{1}{c}{\textbf{Server} 
          ($\vec{\theta}, \HEnc{\vec{x}}$)}\\
        \hline
        \noalign{\medskip}
        \noalign{\vskip3ex}
        && \tikzmarknode{enternode}{
    	\cding{1}\enspace$\begin{array}[t]{@{}l@{}}
    	       \lambda, \mu \getsr \B,\;
    	       \mathrlap{\lambda\neq0,\,\abs{\mu}<\abs{\lambda}}\\
    	       \qquad \text{and }\mathrlap{\sign(\mu) = \sign(\lambda)}\\
    	    \end{array}$}\\
        && \cding{2}\enspace$\deltaS \gets \frac12(1-\sign(\lambda))$\\
        && \cding{3}\enspace$\ct{t}^* \gets
        \HEnc{\lambda \vec{\theta}^\tran\vec{x} + \mu}$\\ 
        & $\larrow[\ct{t}^*]$\\[-1ex]
        \cding{4}\enspace
        \begin{tabular}[t]{@{}l}
          \textbullet\enspace  $t^* \gets
          \mathrlap{\HDec{\ct{t}^*} \enspace\text{(in $\M$)}}$\\ 
          \textbullet\enspace  $\yhat^* = \sign(t^*)$
        \end{tabular}\\
        \noalign{\smallskip}
        & $\rarrow[\HEnc{\hat{y}^*}]$\\
        &&\cding{5}\enspace  
        $\HEnc{\hat{y}} = 
        \mathrlap{(-1)^{\deltaS} \odot \HEnc{\hat{y}^*}}$\\
        && \tikzmarknode{exitnode}{\quad$l \gets l+1$}\\
        \noalign{\vskip3ex}
      \end{protocol}
      \begin{tikzpicture}[overlay,remember picture]
        \draw[dashed,shorten <= 1ex,shorten >= .5ex,-latex]
        (exitnode.south) |- ++(28ex,-3ex) |-
        ([yshift=3ex]enternode.north) -- (enternode); 
      \end{tikzpicture}
      \subcaption{Heuristic version.}\label{fig:ffnn_sign_heuristic}
    \end{minipage}
  \fi
  \caption{Privacy-preserving $\sign$ evaluation with inputs and
    outputs encrypted under the client's public key. This serves as a
    building block for the evaluation over encrypted data of the
    $\sign$ activation function in a neural network and shows the
    computations and message exchanges for \emph{one} unit in one
    hidden layer.}\iflong\else\label{fig:ffnn_sign}\fi
\end{figure}

In the heuristic protocol
(cf.~\cref{fig:svm_classification_heuristic}), the server already gets
an encryption of $\HEnc{\vec{x}}$ as an input. It however fixes the
sign of $t^*$ to that of $\thetax$. If now the server flips it in a
probabilistic manner, the output class (i.e., $\sign(\thetax)$) will
be hidden from the client's view. We detail below the modifications to
be brought to the heuristic protocol to accommodate the new setting:
\begin{itemize}
\item In Step~2 of \cref{fig:svm_classification_heuristic}, the server keeps private the
  value of $\deltaS$ by replacing the definition of $\ct{t}^*$ with
  $\ct{t}^* = \HEnc{\lambda\vec{\theta}^\tran\vec{x} + \mu}$.
\item In Step~3 of \cref{fig:svm_classification_heuristic}, the client then obtains
  $\hat{y}^* \coloneqq \sign(\vec{\theta}^\tran\vec{x}) \cdot
  (-1)^\deltaS$ and returns its encryption $\HEnc{\hat{y}^*}$ to the
  server.
\item The server obtains $\HEnc{\hat{y}}$ as
  $\HEnc{\hat{y}} = (-1)^\deltaS\odot\HEnc{\hat{y}^*}$.
\end{itemize}

If $\vec{\theta} \coloneqq \vec{\theta_j}^{(l)}$ and
$\HEnc{\vec{x}} \coloneqq \HEnc{\vec{x}^{(l)}}$ then the outcome of
the protocol of \cref{fig:ffnn_sign} or of the modified heuristic
protocol is $\HEnc{\hat{y}} = \HEnc{x_j^{(l)}}$. Of course, this can
be done in parallel for all the $d_{l}$ units of layer $l$ (i.e., for
$1 \le j \le d_{l}$; see \cref{eq:unitj}), yielding
$\HEnc{\vec{x}^{(l)}} = (\HEnc{1}, \HEnc{x_1^{(l)}}, \dots,
\HEnc{x_{d_{l}}^{(l)}})$. This means that just one round of
communication between the server and the client suffices per hidden
layer.

\subsection{$\ReLU$ Activation}\label{subsec:relu_activation}
A widely used activation function is the $\ReLU$ function. It allows
a network to easily obtain sparse representations and features cheaper
computations as there is no need for computing the exponential
function~\cite{GBB11}.

The $\ReLU$ function can be expressed from the $\sign$ function as
\begin{equation}\label{eq:ReLU}
  \ReLU(t) = \tfrac12(1 + \sign(t)) \cdot t\enspace.
\end{equation}
Back to our setting, the problem is for the server to obtain
$\HEnc{\ReLU(t)}$ from~$\HEnc{t}$, where $t = \thetax$ with
$\vec{x} \coloneqq \vec{x}^{(l-1)}$ and
$\vec{\theta} \coloneqq \vec{\theta_j}^{(l)}$, in just one round of
communication per hidden layer. We saw in the previous section how to
do it for the $\sign$ function. The $\ReLU$ function is more complex
to apprehend. If we use \Cref{eq:ReLU}, the difficulty is to let the
server evaluate a \emph{product} over encrypted data. To get around that,
the server super-encrypts $\HEnc{\thetax}$, gets
$\HHEnc[\big]{\HEnc{\thetax}}$, and sends it the client. According to
its secret share the client sends back the pair
$\bigl(\HHEnc[\big]{\HEnc{0}}, \HHEnc[\big]{\HEnc{\thetax}}\bigr)$ or
$\bigl(\HHEnc[\big]{\HEnc{\thetax}}, \HHEnc[\big]{\HEnc{0}}\bigr)$.
The server then uses its secret share to select the correct item in
the received pair, decrypts it, and obtains $\HEnc{\ReLU(\thetax)}$.
For this to work, it is important that the client re-randomises
$\HHEnc[\big]{\HEnc{\thetax}}$ as otherwise the server could
distinguish it from $\HHEnc[\big]{\HEnc{0}}$.
\iflong
For an additively
homomorphic encryption algorithm $\HHEnc{\cdot}$, this can be achieved
by adding (over encrypted data) an encryption of $\HEnc{0}$,
$\HHEnc[\big]{\HEnc{\thetax}} \gets \HHEnc[\big]{\HEnc{\thetax}}
\boxplus \HHEnc[\big]{\HEnc{0}}$.
Notice that
$\HHEnc[\big]{\HEnc{\thetax}} \boxplus \HHEnc[\big]{\HEnc{0}} =
\HHEnc[\big]{\HEnc{\thetax} \boxplus \HEnc{0}} =
\HHEnc[\big]{\HEnc{\thetax}}$ where the `$\boxplus$' in the left-hand
side denotes the addition over $\HHEnc{\cdot}$ while the second one
denotes the addition over $\HEnc{\cdot}$.
\fi

Actually, a simple one-time pad suffices to implement the above
solution. To do so, the server chooses a random mask $\mu \in \M$ and
``super-encrypts'' $\HEnc{\thetax}$ as $\HEnc{\thetax + \mu}$. The
client re-randomises it as
$\ct{t}^{**} \coloneqq \HEnc{\thetax + \mu} \boxplus \HEnc{0}$, computes
$\ct{o} \coloneqq \HEnc{0}$, and returns the pair $(\ct{o},\ct{t}^{**})$
or $(\ct{t}^{**}, \ct{o})$, depending on its secret share. The server
uses its secret share to select the correct item and ``decrypts'' it.
If the server (obliviously) picked~$\ct{o}$, it already has the result in
the right form; i.e., $\HEnc{0}$. Otherwise the server has to remove
the mask $\mu$ so as to get
$\HEnc{\thetax} \gets \ct{t}^{**} \boxminus \HEnc{\mu}$. In order to
allow the server to (obliviously) remove or not the mask, the client
also sends an encryption of the pair index; e.g., $0$ for the pair
$(\ct{o},\ct{t}^{**})$ and $1$ for the pair $(\ct{t}^{**}, \ct{o})$.

\Cref{fig:ffnn_relu} details an implementation of this with the DGK+
comparison protocol. Note that to save on bandwidth the same mask
$\mu$ is used for the comparison protocol and to ``super-encrypt''
$\HEnc{\thetax}$.  The heuristic protocol can be adapted in a similar
way\iflong; see \cref{fig:ffnn_relu_heuristic}\else.\fi

\iflong
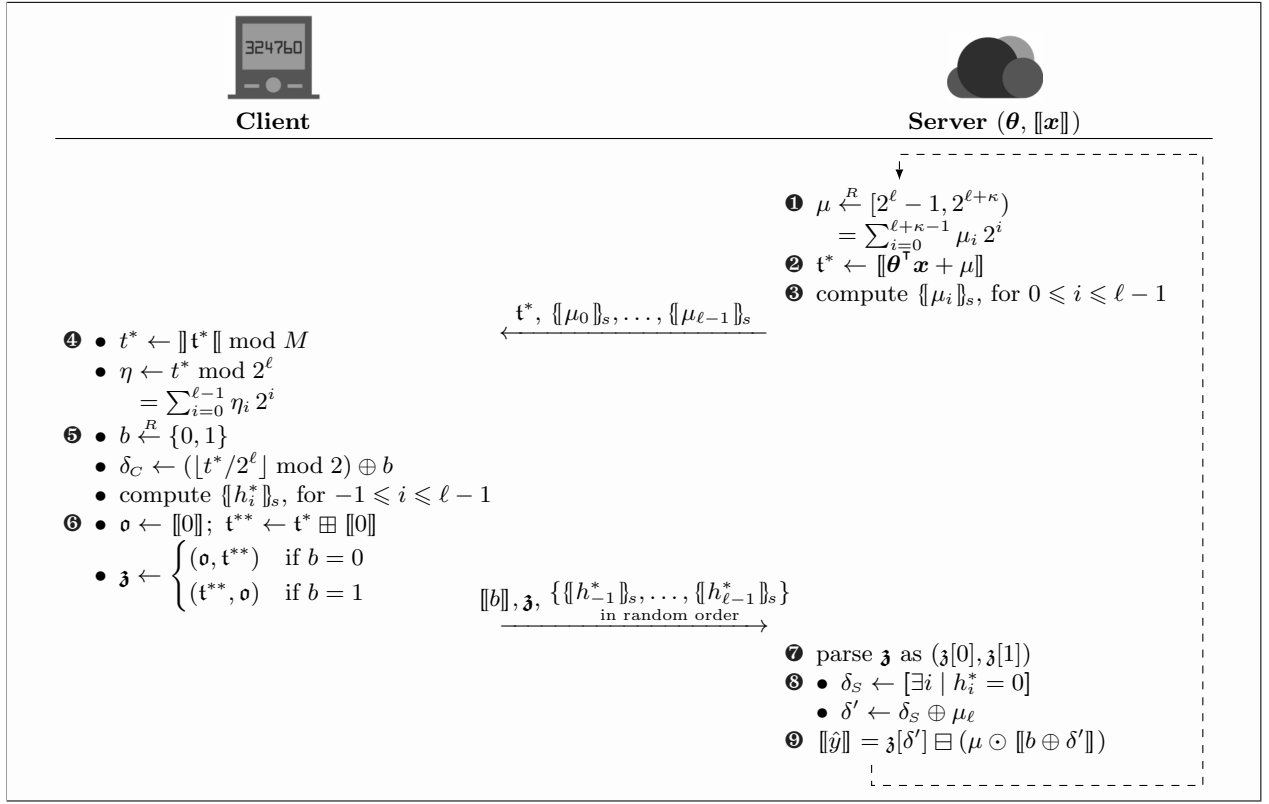
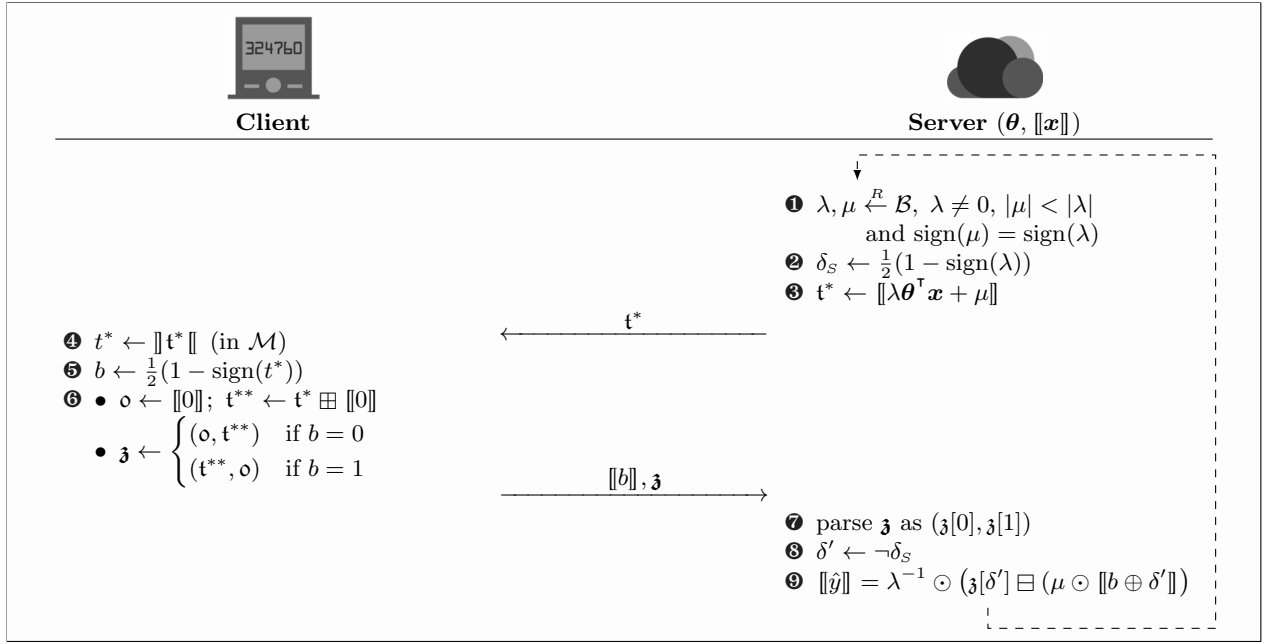
\begin{figure}
\else
\begin{figure}[!t]
\fi
  \begin{minipage}{\textwidth}
    \centering
    \begin{protocol}
      \multicolumn{1}{c}{\textbf{Client}} && \multicolumn{1}{c}{\textbf{Server}
        ($\vec{\theta}, \HEnc{\vec{x}}$)}\\
      \hline
      \noalign{\medskip}
      \noalign{\vskip3ex}
      && \tikzmarknode{enternode}{\cding{1}\enspace$\mu
        \begin{array}[t]{@{}l}
          {}\getsr [2^\ell-1,2^{\ell+\secpar})
          \iflong \\ {}= \sum_{i=0}^{\ell+\secpar-1}\mu_i\,2^i \fi 
        \end{array}$}\\
      && \cding{2}\enspace$\ct{t}^* \gets
      \HEnc{\vec{\theta}^\tran\vec{x} + \mu}$\\ 
      && \cding{3}\enspace
      \begin{tabular}[t]{@{}l@{}}
        compute $\HHEnc{\mu_i}$\iflong, \else\\ \quad\fi
        for $0 \leq i \leq \ell - 1$
      \end{tabular}\\[-1ex]
      & $\larrow[\ct{t}^*,\, \HHEnc{\mu_0}, \dots,
      \HHEnc{\mu_{\ell-1}}]$\\[-2ex]
      \cding{4}\enspace
      \begin{tabular}[t]{@{}l}
        \textbullet\enspace  $t^* \gets
        \mathrlap{\HDec{\ct{t}^*} \bmod M}$\\ 
        \textbullet\enspace  $\eta
        \begin{array}[t]{@{}l}
          {}\gets t^* \bmod 2^\ell
          \iflong \\ {}= \sum_{i=0}^{\ell-1} \eta_i\,2^i \fi 
        \end{array}$
      \end{tabular}\\
      \cding{5}\enspace
      \begin{tabular}[t]{@{}l@{}}
        \textbullet\enspace $b \getsr \{0,1\}$\\
        \textbullet\enspace $\mathrlap{\deltaC \gets (\lfloor
        t^*/2^\ell \rfloor \bmod 2) \oplus b}$\\ 
        \textbullet\enspace compute $\HHEnc{h_i^*}$\iflong, \else\\ \qquad\fi
        for $\mathrlap{-1 \le i \le \ell-1}$ 
      \end{tabular}\\
      \cding{6}\enspace 
      \begin{tabular}[t]{@{}l@{}}
        \textbullet\enspace $\ct{o} \gets \HEnc{0}$;\enspace
        $\mathrlap{\ct{t}^{**} \gets \ct{t}^* \boxplus \HEnc{0}}$\\ 
        \textbullet\enspace $\vec{\ct{z}} \gets  
        \mathrlap{\begin{cases}
            (\ct{o}, \ct{t}^{**}) & \text{if $b=0$}\\
            (\ct{t}^{**}, \ct{o}) & \text{if $b=1$}
          \end{cases}}$
      \end{tabular}\\[-4ex]
      \noalign{\smallskip}
      & $\rarrow[\HEnc{b},
      \vec{\ct{z}},\,\substack{\textstyle\{\HHEnc{h_{-1}^*},  
        \dots, \HHEnc{h_{\ell-1}^*}\}\\ 
        \text{in random order}}]$\\
      &&\cding{7}\enspace parse $\vec{\ct{z}}$ as $(\ct{z}[0], \ct{z}[1])$\\
      &&\cding{8}\enspace 
      \begin{tabular}[t]{@{}l@{}}
        \textbullet\enspace$\deltaS \gets \mathrlap{\pred[\bigl]{\exists
        i \mid h_i^* = 0}}$\\
        \textbullet\enspace $\delta' \gets \deltaS \oplus \mu_\ell$
      \end{tabular}\\
      &&\tikzmarknode{exitnode}{\cding{9}\enspace $\HEnc{\hat{y}} =
        \ct{z}[\delta'] \boxminus 
        \mathrlap{(\mu \odot \HEnc{b \oplus
            \delta'})}$
      }\\
      \noalign{\vskip3ex}
    \end{protocol}
    \begin{tikzpicture}[overlay,remember picture]
      \draw[dashed,shorten <= 1ex,shorten >= .5ex,-latex]
      (exitnode.south) |- ++(32ex,-3ex) |-
      ([yshift=3ex]enternode.north) -- (enternode);
    \end{tikzpicture}
    \iflong\subcaption{Core version.}\label{fig:ffnn_relu}\fi
  \end{minipage}
  \iflong
  \par\vskip\bigskipamount
  \begin{minipage}{\textwidth}
    \centering
    \begin{protocol}
      \multicolumn{1}{c}{\textbf{Client}} && \multicolumn{1}{c}{\textbf{Server}
        ($\vec{\theta}, \HEnc{\vec{x}}$)}\\
      \hline
      \noalign{\medskip}
      \noalign{\vskip3ex}
      && \tikzmarknode{enternode}{
        \cding{1}\enspace$\begin{array}[t]{@{}l@{}}
                            \lambda, \mu \getsr \B,\;
                            \mathrlap{\lambda\neq0,\,\abs{\mu}<\abs{\lambda}}\\
                            \qquad \text{and }\mathrlap{\sign(\mu) =
                            \sign(\lambda)}\\ 
                          \end{array}$}\\
      && \cding{2}\enspace$\deltaS \gets \frac12(1 - \sign(\lambda))$\\
      && \cding{3}\enspace$\ct{t}^* \gets
      \HEnc{\lambda\vec{\theta}^\tran\vec{x} + \mu}$\\
      & $\larrow[\ct{t}^*]$\\[-2ex]
      \cding{4}\enspace $t^* \gets
      \mathrlap{\HDec{\ct{t}^*} \enspace\text{(in $\M$)}}$\\
      \cding{5}\enspace $b \gets \frac12(1 - \sign(t^*))$\\
      \cding{6}\enspace
      \begin{tabular}[t]{@{}l@{}}
        \textbullet\enspace $\ct{0} \gets \HEnc{0}$;\enspace 
        $\mathrlap{\ct{t}^{**} \gets \ct{t}^* \boxplus \HEnc{0}}$\\ 
        \textbullet\enspace $\vec{\ct{z}} \gets
        \mathrlap{\begin{cases}
            (\ct{0}, \ct{t}^{**}) & \text{if $b = 0$}\\
            (\ct{t}^{**}, \ct{0}) & \text{if $b = 1$}
          \end{cases}}$
      \end{tabular}\\[-3ex]
      \noalign{\smallskip}
      & $\rarrow[\HEnc{b}, \vec{\ct{z}}]$\\
      &&\cding{7}\enspace parse $\vec{\ct{z}}$ as $(\ct{z}[0],
      \ct{z}[1])$\\
      &&\cding{8}\enspace $\delta' \gets \neg\deltaS$\\ 
      &&\tikzmarknode{exitnode}{\cding{9}\enspace $\HEnc{\hat{y}} =
        \lambda^{-1} \odot \bigl(\ct{z}[\delta'] \boxminus
          (\mu \odot \HEnc{b \oplus \delta'} \bigr)$}\\
      \noalign{\vskip3ex}
    \end{protocol}
    \begin{tikzpicture}[overlay,remember picture]
      \draw[dashed,shorten <= 1ex,shorten >= .5ex,-latex]
      (exitnode.south) |- ++(22ex,-3ex) |-
      ([yshift=3ex]enternode.north) -- (enternode);
    \end{tikzpicture}
    \subcaption{Heuristic version.}\label{fig:ffnn_relu_heuristic}
  \end{minipage}
  \fi
  \caption{Privacy-preserving $\ReLU$ evaluation with inputs and
    outputs encrypted under the client's public key.%
  }\iflong\else\label{fig:ffnn_relu}\fi
\end{figure}

\begin{remark}
  It is interesting to note that the new protocols readily extend to
  any piece-wise linear function, such as the clip function
  $\operatorname{clip}(t) = \max(0,\min(1,\frac{t+1}2))$
  (a.k.a. hard-sigmoid function).
  \iflong Indeed, as shown in~\cite{ABMM16}, any piece-wise linear
  function $\bbbr \rightarrow \bbbr$ with $p$ pieces can be
  represented as a sum of $p$ $\ReLU$ functions.\fi
\end{remark}

\iflong
\section{Numerical Experiments}\label{sec:evaluation}

To show the feasibility of our protocols, we consider their
implementation using Paillier's cryptosystem. In this
section we first recall this cryptosystem and then give timing
measurements of code execution and message size estimation showing the
feasibility of the proposed methods.

\subsection{Paillier's Cryptosystem}

Paillier's cryptosystem~\cite{Pai99} is an asymmetric algorithm which
is homomorphic to addition: with the encrypted values of two messages
$m_1$ and $m_2$, it is possible to compute an encrypted value of
$m_1 + m_2$. The scheme is known to be semantically secure under the
decisional composite residuosity assumption (DCRA).

\begin{description}
\item[Set-up]--- On input, given a security parameter, each party can
  create a key pair by picking at random two large primes $p$ and $q$
  and computing the product $N = p \, q$.  The public key is simply
  $\pk := N$ while the private key is $\sk := \{p,q\}$.  The message
  space is $\M = \bbbz/N\bbbz$.

\item[Encryption]--- To encrypt a message $m \in \M$, using the public
  key $\pk$, one first picks a random integer $r \getsr [1, N)$ and
  then computes the ciphertext
  \[
    \ct{m} \coloneqq \HEnc{m} = (1 + m N) \,r^N \bmod N^2\enspace.
  \]
  
\item[Decryption]--- To decrypt the ciphertext $\ct{m}$, the recipient
  first needs to recover $r$ from $\ct{m}$ using the matching secret
  key $\sk$ as
  \[
    r = \ct{m}^{N^{-1} \bmod{(p-1)(q-1)}} \bmod{N}\quad\text{with
      $\lambda = \lcm(p-1,q-1)$}
  \]
  and then recover the plaintext message $m = \frac{(\ct{m} \,
    r^{-N} \bmod N^2) - 1}{N}$. 
\end{description}

\paragraph{Homomorphism} The main homomorphism characteristics of this
scheme are summarised as follows:
\[
  \begin{cases}
    \HEnc{m_1 + m_2} = \HEnc{m_1} \boxplus \HEnc{m_2}
    = \HEnc{m_1} \cdot \HEnc{m_2} \bmod N^2\thinspace;\\    
    \HEnc{m_1 - m_2} = \HEnc{m_1} \boxminus \HEnc{m_2}
    = \HEnc{m_1} / \HEnc{m_2} \bmod N^2\thinspace;\\
    \HEnc{a \cdot m} = a \boxdot \HEnc{m}
    = \HEnc{m}^a \bmod N^2\thickspace.
  \end{cases}
\]

\subsection{Graphs}

We implemented the protocols presented in the previous sections using
the Python (version 3.7.4) programming language and the GNU
multiprecision arithmeic library (GMP version 6.1.2) on a $64$-bit
machine equipped with an Intel i7-4770 processor running at
3.4GHz. The GMP library is essentially used for generating the prime
numbers required for the keys and for performing modular
exponentiation of large integers. We used a bit precision of $P = 53$
(see~\cref{subsubsec:real_numbers}), which corresponds to the number
of significant bits for typical IEEE-754 floating point numbers
supported by Python.

We tested the protocols using randomly generated models and also
models based on the Enron-spam data set~\cite{MAP06}, a standardized
audiology data set~\cite{JPQ87}, a credit approval data
set~\cite{CA87}, a dataset of human activity recognition using
smartphones~\cite{AGOPR12},
and the breast cancer database from University of Wisconsin Hospitals,
Madison~\cite{WM90}. Performance measurements for various key sizes
are presented in
\cref{fig:pc-be-mainapr1_linear_regression_core_protocol},
\cref{fig:pc-be-mainapr1_linear_regression_dual_protocol},
\cref{fig:pc-be-mainapr1_svm_core_protocol},
\cref{fig:pc-be-mainapr1_svm_heuristic_protocol}, and
\cref{fig:ffnn_graph}. Computing times are average over 100 iterations
of each protocol on each model. The computing time depends mostly on
modular exponentiation of large integers and is linear in the size of
the model.

We selected the different key sizes to adequate protection until years
from 2020 to 2050 using Lenstra's method~\cite{AL06}. Those key sizes
correspond to a security parameter $\kappa$ between $82$ and $102$
which means that we compare between $193$ and $221$ bits (depending on
the number of features) when using the DGK+ protocol.

\newcommand{\perfgraph}[4]{
	\begin{subfigure}{.47\textwidth}
	\centering
	\includegraphics[width=\textwidth]{Code/graphs/#1/#2_#3.pdf}
	\caption{#4 dataset.}
	\label{fig:#1_#2_#3}
	\end{subfigure}}

\newcommand{\pergraphset}[3]{
\begin{figure}
\centering
	\perfgraph{#1}{#2}{random}{Random} 
	\hfill
	\perfgraph{#1}{#2}{wdbc}{Breast cancer} 
	\par\bigskip
	\perfgraph{#1}{#2}{credit}{Credit} 
	\hfill
	\perfgraph{#1}{#2}{audiology}{Audiology} 
	\par\bigskip
	\perfgraph{#1}{#2}{har}{Human activity recognition} 
	\hfill
	\perfgraph{#1}{#2}{enron}{Enron-Spam} 
\caption{\label{fig:#1_#2}Average computing time in milli-seconds for client and server side using data sets of various sizes, using the #3.}
\end{figure}
}

\pergraphset{pc-be-mainapr1}{linear_regression_core_protocol}{private linear regression core protocol}
\pergraphset{pc-be-mainapr1}{linear_regression_dual_protocol}{linear regression dual protocol}
\pergraphset{pc-be-mainapr1}{svm_core_protocol}{SVM core protocol}
\pergraphset{pc-be-mainapr1}{svm_heuristic_protocol}{SVM heuristic protocol}

\newcommand{\perfgraphffnn}[4]{
	\begin{subfigure}{.47\textwidth}
	\centering
	\includegraphics[width=\textwidth]{Code/graphs/#1/#2_#3.pdf}
	\caption{#4}
	\label{fig:#2_#3}
	\end{subfigure}
}

\mbox{}
\begin{figure}[p]
\centering
\perfgraphffnn{pc-be-mainapr1}{ffnn_generic}{random}{Generic protocol}
\hfill
\perfgraphffnn{pc-be-mainapr1}{ffnn_relu}{random}{Private $\ReLU$ activation function}\par\bigskip
\perfgraphffnn{pc-be-mainapr1}{ffnn_sign}{random}{Private sign activation function}
\hfill
\perfgraphffnn{pc-be-mainapr1}{ffnn_sign_heuristic}{random}{Heuristic protocol with sign activation}
\caption{Performance of private feed forward neural network evaluation using different activation function and protocols.}\label{fig:ffnn_graph}
\end{figure}

\subsection{Estimation of Message Sizes}

The size of exchanged messages highly depends on the implementation and
the encoding used. We chose to give a theoretical estimate of the
size of messages exchanged during the protocols.

In order to give the reader a concrete idea of size the of the messages we provide numerical estimates. For those, we imposed a strong encryption for Paillier's
algorithm and choose $\ell_M = \log_2(M) = 2048$ bits, corresponding
security parameter $\kappa = 95$ (see~\cite{AL06}). With Paillier's
scheme the size of ciphertext is $2 \ell_M$ bits. We selected a model
with $d=30$ features. To get a bit-size estimate $\ell$ of the
upperbound on inner products, we assumed that model weights and input
data are normalised so that $\abs{x_i} \leq 1$ and
$\abs{\theta} \leq 1$ when considering their real value or equivalently $\abs{x_i} \leq 2^P$ and
$\abs{\theta} \leq 2^P$ when considering their integer representation (see~\cref{subsubsec:real_numbers}). Hence $B=(d+1)2^{2P}$
and $\ell = 2P + \lceil\log_2(d+1)\rceil = 111$.  In
the case of feed forward neural networks, we chose $L=3$ and
$d_l=d=30$ ($0\leq l \leq L$).

Message sizes and their numerical estimates are summarised
in~\Cref{tab:eval}.

\begin{table}[h!p]
  \caption{Message sizes for the various protocols presented in the paper.}
  \label{tab:eval}
  \begin{center}
  \begin{minipage}{\textwidth}\centering
  \def\myfnsymbol{5}    
  \setcounter{footnote}{\myfnsymbol}\addtocounter{footnote}{-1}
  \renewcommand{\thefootnote}{\fnsymbol{footnote}}    
  \begin{tabular}{p{12em}lrr}
    \toprule
    \multicolumn{1}{c}{\textbf{Protocol}}
    & \multicolumn{1}{c}{\textbf{Protocol step}}
    & \multicolumn{1}{c}{\textbf{Size}}
    & \multicolumn{1}{c}{(kB)}\\
    \midrule
    \multirow{2}{=}{Linear/Logistic regression\\ (core)
    --- \cref{fig:linear_regression_core}} 
    & Client sends: $\pkC$, $\bigl\{\HEnc{x_i}\bigr\}_{1 \leq i \leq d}$
    & $\ell_M +  d \cdot 2\ell_M$ & $\approx 15$\\      
    \cmidrule{2-4}
    & Server sends: $\ct{t}$ & $\approx 2\ell_M$ & $<1$\\
    \midrule
    \multirow{2}{=}{Linear/Logistic regression\\ (dual)
    --- \cref{fig:linear_regression_dual}} 
    & Server publishes:
      $\pkS$, $\bigl\{\HHEnc{\theta_i}\bigr\}_{0 \leq i \leq d}$
    & $\ell_M + (d + 1) \cdot 2\ell_M$ & $\approx 16$\\
    \cmidrule(l){2-4}
    & Client sends: $\ct{t}^*$ & $\approx 2\ell_M$ & $< 1$\\
    \cmidrule{2-4}
    & Server sends: $t^*$ & $\approx \ell_M$ & $< 1$\\
    \midrule
    \multirow{2}{=}{SVM classification\\ (core)
    --- \cref{fig:svm_classification_core}}
    & Server publishes:
      $\pkS$, $\bigl\{\HHEnc{\theta_i}\bigr\}_{1 \leq i \leq d}$
    & $\ell_M + (d+1) \cdot 2\ell_M$ & $\approx 16$\\
    \cmidrule{2-4}
    & Client sends: $\ct{t}^*$, $\bigl\{\HEnc{\mu_i}\bigr\}_{0 \leq i
      \leq \ell-1}$
    & $2\ell_M + \ell \cdot 2 \ell_M$ & $\approx 56$\\
    \cmidrule{2-4}
    & Server sends: $\bigl\{\HEnc{h_i^*}\bigr\}_{-1\leq i \leq
      \ell-1}$
    & $(\ell+1) \cdot 2\ell_M$ & $\approx 56$\\
    \midrule
    \multirow{2}{=}{SVM classification\\ (heuristic)
    --- \cref{fig:svm_classification_heuristic}}
    & Client sends: $\pkC$, $\bigl\{\HEnc{x_i}\bigr\}_{1 \leq i \leq
      d}$
    & $\ell_M + d \cdot 2 \ell_M$ & $\approx 15$\\
    \cmidrule{2-4}
    & Server sends: $\ct{t}^*$ & $2\ell_M$ & $<1$\\
    \midrule
    \multirow{2}{=}{FFNN\\ (generic)
    --- \cref{fig:ffnn_generic}}
    & Server sends\footnotemark[\myfnsymbol]: 
    & \multirow{2}{*}{$L \cdot d\cdot2\ell_M$} & $45$\\
    &\quad $\ct{t}_j^{(l)}$ && ($15$ per layer)\\
    \cmidrule{2-4}
    & Client sends\footnotemark[\myfnsymbol]: 
    & \multirow{2}{*}{$L \cdot d\cdot2\ell_M $}  & $45$\\
    & \quad $\HEnc{x_j^{(l)}}$
    && ($15$ per layer)\\
    \midrule
    \multirow{2}{=}{FFNN sign act. \\(core)
    --- \cref{fig:ffnn_sign}}
    & Server sends\footnotemark[\myfnsymbol]:
    & \multirow{2}{*}{$L\cdot d \cdot(\ell + 1)\cdot 2\ell_M$}  & $5,040$\\
    & \quad $\ct{t}^*$, $\bigl\{\HHEnc{\mu_i}\bigr\}_{0 \leq i \leq
      \ell -1}$
    && ($1,680$ per layer)\\
    \cmidrule{2-4}
    & Client sends\footnotemark[\myfnsymbol]:
    & \multirow{2}{*}{$L\cdot d \cdot(\ell + 2)\cdot 2\ell_M$}  & $5,085$\\
    & \quad $\HEnc{\yhat^*}$, $\bigl\{\HHEnc{h^*_i}\bigr\}_{-1 \leq i
      \leq \ell -1}$ 
    && ($1,695$ per layer)\\
    \midrule
    \multirow{2}{=}{FFNN sign act.\\ (heuristic)
    --- \cref{fig:ffnn_sign_heuristic}}
    & Server sends\footnotemark[\myfnsymbol]:
    & \multirow{2}{*}{$L\cdot d \cdot 2\ell_M$} & $45$\\
    & \quad $\ct{t}^*$
    && ($15$ per layer)\\
    \cmidrule{2-4}
    & Client sends\footnotemark[\myfnsymbol]:
    & \multirow{2}{*}{$L\cdot d \cdot 2\ell_M$} & $45$\\
    & \quad $\HEnc{\yhat^*}$
    && ($15$ per layer)\\
    \midrule
    \multirow{2}{=}{FFNN $\ReLU$ act.\\ (core)
    --- \cref{fig:ffnn_relu}}
    & Server sends\footnotemark[\myfnsymbol]:
    & \multirow{2}{*}{$L\cdot d \cdot(\ell + 1)\cdot 2\ell_M$}
    & $5,040$\\
    & \quad $\ct{t}^*$, $\bigl\{\HHEnc{\mu_i}\bigr\}_{0 \leq i \leq \ell -1}$ 
    && ($1,680$ per layer)\\
    \cmidrule{2-4}
    & Client sends\footnotemark[\myfnsymbol]:
    & \multirow{2}{*}{$L\cdot d \cdot(\ell + 4)\cdot 2\ell_M$}
    & $5,175$\\
    & \quad $\HEnc{b}$, $\ct{z}$, $\bigl\{\HHEnc{h^*_i}\bigr\}_{-1
      \leq i \leq \ell -1}$
    && ($1,725$ per layer)\\
    \midrule
    \multirow{2}{=}{FFNN $\ReLU$ act.\\ (heuristic)
    --- \cref{fig:ffnn_relu_heuristic}}
    & Server sends\footnotemark[\myfnsymbol]:
    & \multirow{2}{*}{$L\cdot d \cdot 2\ell_M$} & $45$\\
    & \quad $\ct{t}^*$
    && ($15$ per layer)\\
    \cmidrule{2-4}
    & Client sends\footnotemark[\myfnsymbol]:
    & \multirow{2}{*}{$L\cdot d \cdot 3 \cdot 2\ell_M$} & $135$\\
    & \quad $\HEnc{b}$, $\ct{z}$
    && ($45$ per layer)\\
    \bottomrule
  \end{tabular}
  \footnotetext{\footnotemark[\myfnsymbol] Per unit, for each layer
    $l$ ($1 \le l \le L$).}
  \end{minipage}
  \end{center}
\end{table}
\fi

\section{Conclusion}\label{sec:conclusion}

In this work, we presented several protocols for privacy-preserving
regression and classification. Those protocols only require
additively homomorphic encryption and limit interactions to a mere
request and response. They are secure against semi-honest adversaries.
They can be used as-is in generalised linear models (including
logistic regression and SVM classification) or applied to other
machine-learning algorithms. As an illustration, we showed how they
nicely adapt to binarised neural networks or to feed-forward neural
networks with the $\ReLU$ activation function.

\appendix

\iflong\else
\section{More Private Protocols}\label{sec:more}

\subsection{Linear Regression}\label{subsec:linreg_more}
As seen in \Cref{subsec:linear_models}, linear regression produces estimates
using the identity map for $g$:  $\yhat = \thetax$. Since
$\thetax \iflong = \sum_{j=0}^d \theta_j\,x_j \fi$ is linear, given an
encryption $\HEnc{\vec{x}}$ of $\vec{x}$, the value of
$\HEnc{\vec{\theta}^\tran\vec{x}}$ can be homomorphically evaluated,
in a provably secure way \cite{GLLM04}.

Therefore, the client encrypts its feature vector $\vec{x}$ under its
public key and sends $\HEnc{\vec{x}}$ to the server. Using
$\vec{\theta}$, the server then computes
$\HEnc{\vec{\theta}^\tran\vec{x}}$ and returns it the client.
Finally, the client decrypts $\HEnc{\vec{\theta}^\tran\vec{x}}$ and
gets the output~$\yhat = \vec{\theta}^\tran\vec{x}$. This is
straightforward and only requires one round of~communication.

\subsection{SVM Classification}\label{subsec:svm_classification_more}
\paragraph*{A Na\"{\i}ve Protocol}
A client holding a private feature vector $\vec{x}$ wishes to evaluate
$\sign(\vec{\theta}^\tran \vec{x})$ where $\vec{\theta}$ parametrises
an SVM classification model. In the primal approach, the client can
encrypt $\vec{x}$ and send $\HEnc{\vec{x}}$ to the server. Next, the
server computes ${\HEnc{\eta} = \HEnc{\thetax + \mu}}$ for some random
mask~$\mu$ and sends~$\HEnc{\eta}$ to the client. The client
decrypts~$\HEnc{\eta}$ and recovers~$\eta$. Finally, the client and
the server engage in a private comparison protocol with respective
inputs $\eta$ and $\mu$, and the client deduces the sign of $\thetax$
from the resulting comparison bit $[\mu \le \eta]$.

There are two issues. If we use the DGK+ protocol for the private
comparison, at least one extra exchange from the server to the client
is needed for the client to get $[\mu \le \eta]$. This can be fixed
by considering the dual approach. A second, more problematic, issue
is that the decryption of $\HEnc{\eta} \coloneqq \HEnc{\thetax + \mu}$
yields $\eta$ as an element of $\M \cong \bbbz/M\bbbz$, which is not
necessarily equivalent to the \emph{integer} $\thetax + \mu$. Note
that if the inner product $\thetax$ can take any value in $\M$,
selecting a smaller value for $\mu \in \M$ to prevent the modular
reduction does not solve the issue because the value of $\eta$ may
then leak information on $\vec{\theta}^\tran \vec{x}$.

\paragraph*{A Heuristic Protocol (Dual Approach)}
The bandwidth usage with the heuristic comparison protocol
(cf.~\cref{fig:svm_classification_heuristic}) could be even reduced to one ciphertext and
a single bit with the dual approach. From the published encrypted
model $\HHEnc{\vec{\theta}}$, the client could homomorphically compute
and send to the server
$\ct{t}^* = \HHEnc{\lambda\vec{\theta}^\tran\vec{x} + \mu}$ for random
$\lambda, \mu \in \B$ with $\abs{\mu} < \abs{\lambda}$. The server
would then decrypt $\ct{t}^*$, obtain~$t^*$, compute
$\deltaS = \frac12(1-\sign(t^*))$, and return $\deltaS$ to the client.
Analogously to the primal approach, the output class
$\yhat = \sign(\vec{\theta}^\tran\vec{x})$ is obtained by the client
as $\yhat = (-1)^\deltaS\cdot\sign(\lambda)$. However, and contrarily
to the primal approach, the potential information leakage resulting
from $t^*$---in this case on $\vec{x}$---is now on the server's side,
which is in contradiction with our Requirement~\ref{itm:data_conf} (input
confidentiality). We do not further discuss this variant.

\fi

\iflong
\clearpage
\fi
\bibliographystyle{splncs04a}
\bibliography{paper}

\end{document}